\documentclass[a4paper]{article}
\usepackage[utf8]{inputenc}
\usepackage{physics}
\usepackage{a4wide}
\usepackage[export]{adjustbox}
\usepackage{tabularx}
\usepackage{float}
\usepackage{cite}
\usepackage{subfig}
\usepackage{multirow}
\graphicspath{{Figures/}{../Figures/}}
\usepackage{subfiles}
\usepackage{wrapfig}
\usepackage{amssymb}
\usepackage{appendix}
\usepackage{amsmath}
\usepackage{mathtools}
\usepackage{xcolor}
\usepackage{qcircuit}
\usepackage[pdftex,colorlinks=true,linkcolor=black,citecolor=blue,urlcolor=black]{hyperref}
\usepackage{thm-restate}
\usepackage{enumitem} 
\usepackage{authblk}

\usepackage{amsthm}
\usepackage{thmtools,thm-restate}
\newtheorem{theorem}{Theorem}

\newtheorem{lemma}{Lemma}
\newtheorem{corollary}{Corollary}
\newtheorem{definition}{Definition}

\newtheorem{fact}{Fact}

\newcommand{\mO}{\mathcal{O}}

\newcommand{\mS}{\mathcal{S}}
\newcommand{\cS}{\mS}

\newcommand{\QMA}{\mathsf{QMA}}
\newcommand{\BQP}{\mathsf{BQP}}

\newcommand{\QCMA}{\mathsf{QCMA}}
\newcommand{\NP}{\mathsf{NP}}
\newcommand{\MA}{\mathsf{MA}}

\newcommand{\poly}{\mathop{poly}}

\newcommand{\kb}[2]{|#1\rangle\langle#2|}
\newcommand{\proj}[1]{|#1\rangle\langle#1|}
\newcommand{\ot}{\otimes}

\newcommand{\hatt}[1][]{\hat{t#1}}

\newcommand{\glh}[0]{\mathsf{GLHLE}(k,c,\delta,\zeta)}

\title{Complexity of the Guided Local Hamiltonian Problem: Improved Parameters and Extension to Excited States}
\author[1,3]{Chris Cade}

\author[2]{Marten Folkertsma}

\author[2]{Jordi Weggemans}

\affil[1]{QuSoft \& University of Amsterdam (UvA), Amsterdam, the Netherlands}
\affil[2]{QuSoft \& CWI, Amsterdam, the Netherlands}
\affil[3]{Fermioniq, Amsterdam, the Netherlands}

\date{\today}

\begin{document}
\maketitle

\begin{abstract}
\noindent Recently it was shown that the so-called \emph{guided local Hamiltonian} problem -- estimating the smallest eigenvalue of a $k$-local Hamiltonian when provided with a description of a quantum state (`guiding state') that is guaranteed to have substantial overlap with the true groundstate -- is $\BQP$-complete for $k\geq 6$ when the required precision is inverse polynomial in the system size $n$, and remains hard even when the overlap of the guiding state with the groundstate is close to a constant ($\frac12 -\Omega( \frac{1}{\poly(n)}$)). 

We improve upon this result in three ways: by showing that it remains $\BQP$-complete when i) the Hamiltonian is 2-local, ii) the overlap between the guiding state and target eigenstate is as large as $1 - \Omega(\frac{1}{\poly(n)})$, and iii) when one is interested in estimating energies of excited states, rather than just the groundstate. Interestingly, iii) is only made possible by first showing that ii) holds. 
\end{abstract}

\section{Introduction}
Quantum chemistry is generally regarded as one of the most promising applications of quantum computers~\cite{Aaronson2009ComputationalCW,Bauer2020chemical}. To obtain information about the physical and chemical properties of quantum systems, one usually needs to estimate their spectral properties. For example, in order to understand chemical reaction processes, it is often necessary to know the relative energies of intermediate states along a particular reaction pathway. By comparing these energies, one can deduce which sequence of molecular transformations is most likely to occur in reality. Such energies are usually estimated via some computational method from the \textit{electronic structure Hamiltonian} associated to the system. The accuracy to which these energies are known is tremendously important: typically, in order to confidently distinguish between several reaction mechanisms, one would like to have an accuracy that is smaller than the so-called \textit{chemical accuracy}, which is about 1.6 millihartree.\footnote{This quantity, which is $\approx$1 kcal/mol, is chosen to match the accuracy achieved by thermochemical experiments.} Since in chemistry the norm of the Hamiltonian is allowed to scale polynomially in the number of particles and local dimension per particle, obtaining chemical accuracy corresponds to obtaining inverse polynomial precision when one considers (sub-)normalized Hamiltonians.\footnote{By normalized Hamiltonian, we mean a Hamiltonian $H$ such that $\|H\|\leq1$, where $\|\cdot\|$ denotes the operator norm.} 

The problem of estimating ground- and excited-state energies of the electronic structure Hamiltonian, without any additional information to help us, is known to be $\QMA$-hard~\cite{o2021electronic}.\footnote{$\QMA$ is the set of problems for which a {\sc yes}-instance can be verified efficiently with a quantum computer, and can be thought of as the quantum analogue of the class $\mathsf{NP}$.} Assuming that $\QMA \neq \BQP$, this suggests that estimating energies of physical Hamiltonians is a hard problem even for quantum computers. Hence, to better understand when and how quantum computers might provide significant speedups for computational problems in chemistry, it can be instructive to narrow down which problems are easy for quantum computers (i.e. inside $\BQP$), but seemingly difficult for classical ones. One way to do this is to search for problems that are $\BQP$-complete, implying that they cannot be solved (in polynomial time) on a classical computer unless $\BQP = \mathsf{BPP}$. 

Recently, Gharibian and Le Gall~\cite{gharibian2021dequantizing} raised and formalized the following natural question: \emph{``If we are given a (quantum) state guaranteed to be a good approximation to the true groundstate of a particular Hamiltonian, how difficult is it to accurately estimate the groundstate energy?"}. Such a question is motivated by the observation that, in practice, one often knows additional information that can help to compute energies (for example a state known to have \textit{energy} close to the ground energy, such as a Hartree-Fock state), which could make the task somewhat easier. 

\subsection{Definitions}
\paragraph{Notation} We denote by $[M]$ the set $\{1,\dots,M\}$. We write $\lambda_i(A)$ to denote the $i$th eigenvalue of a Hermitian matrix $A$, ordered in non-decreasing order, with $\lambda_0(A)$ denoting the smallest eigenvalue (ground energy).  We denote $\text{eig}(A) = \{ \lambda_0(A),\dots,\lambda_{\text{dim}(A)-1}(A)\}$ for the (ordered) set of all eigenvalues of $A$. For some Hilbert space $\mathcal{H} = \mathcal{S}_1 + \mathcal{S}_2$, we denote $A | _{\mathcal{S}_1}$ for restriction of $A$ to $\mathcal{S}_1$.

\noindent As mentioned above, the class of all computational \textit{promise problems} that can be efficiently solved by a quantum computer is called $\BQP$, of which the formal definition is listed below.
\begin{definition}[$\BQP$] $\BQP$ is the set of all languages $L = \{L_{\text{yes}},L_{\text{no}}\}  \subset \{ 0,1\}^*$ for which there exists a (uniform family of) quantum circuit $V$ of size $T=\poly(n)$ acting on $r=\poly(n)$ qubits such that for every input $x\in L$ of length $n=|x|$,
\begin{itemize}
    \item if $x \in L_{\text{yes}} $ then the probability that $V$ accepts input $\ket{x,0}$ is $\geq c(=2/3)$,
    \item if $x \in L_{\text{no}}$ then the probability that $V$ accepts input $\ket{x,0}$ is $\leq s(=1/3)$.
\end{itemize}
\end{definition}
\begin{fact} [Error reduction] the completeness and soundness parameters in $\BQP$ can be made exponentially close to $1$ and $0$, respectively, i.e. $c=1-2^{-\mO(n)}$ and $s=2^{-\mO(n)}$.
\label{fact:er}
\end{fact}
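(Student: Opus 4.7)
The plan is to use the standard parallel-repetition-with-majority-voting amplification. Given a $\BQP$ verifier $V$ with completeness $c=2/3$ and soundness $s=1/3$, I would construct a new verifier $V'$ that runs $k=\Theta(n)$ independent copies of $V$ on disjoint ancilla registers (all initialized to $\ket{x,0}$), measures each of the $k$ output qubits, and accepts iff strictly more than $k/2$ of them yield $1$. Because the copies act on tensor-product spaces, the $k$ measurement outcomes are genuinely independent Bernoulli random variables with common mean $p \geq 2/3$ in the \textsc{yes}-case and $p \leq 1/3$ in the \textsc{no}-case, so the problem reduces to a classical concentration bound.

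First, I would check that $V'$ is itself a valid $\BQP$ circuit. Its size is $kT + \poly(k) = \poly(n)$, it acts on $kr + \poly(k) = \poly(n)$ qubits, and its uniform description is computable in polynomial time from that of $V$ together with a reversible implementation (via Toffoli gates) of a classical majority circuit on $k$ bits appended at the end. The remaining step is the probability analysis: let $X$ denote the number of $1$-outcomes among the $k$ runs. In the \textsc{yes}-case $\mathbb{E}[X]\geq 2k/3$, so the multiplicative Chernoff bound gives $\Pr[X \leq k/2] \leq \exp(-\Omega(k))$; symmetrically, in the \textsc{no}-case $\Pr[X > k/2] \leq \exp(-\Omega(k))$. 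Choosing $k = \alpha n$ for a sufficiently large constant $\alpha$ yields both bounds equal to $2^{-\mO(n)}$, which is exactly the statement of the fact.

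No step here is genuinely difficult, since this is a textbook result; the only real obstacles are bookkeeping. One must (i) verify that the reversible majority-of-$k$ subcircuit fits within polynomial size and is uniformly describable, and (ii) observe that independence across the $k$ runs is preserved as long as the ancilla registers remain disjoint and each output qubit is measured separately — so no coherent amplitude-amplification argument is needed, and the quantum and classical parts of the analysis decouple cleanly.
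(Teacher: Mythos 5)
Your proof is correct: parallel repetition on disjoint registers followed by a reversible majority vote, analyzed via a Chernoff bound, is exactly the standard amplification argument that the paper implicitly invokes when stating this as a known fact without proof. Nothing further is needed; the only cosmetic remark is that the exponentially small error is more precisely written $2^{-\Omega(n)}$, though the paper itself uses the same $2^{-\mO(n)}$ convention.
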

One can formulate the question mentioned in the preceding section as a decision problem, defined as the guided Local Hamiltonian problem ($\mathsf{GLH}$), recently introduced by Gharibian and Le Gall~\cite{gharibian2021dequantizing}. We generalize their formulation of the problem by considering arbitrary eigenstates, which we will denote as the \textit{Guided Local Hamiltonian Low Energy}-problem ($\mathsf{GLHLE}$). For this, we first need a definition of semi-classical states, which is used in the problem definition of $\mathsf{GLHLE}$.
\begin{definition}[Semi-classical state - from~\cite{grilo2015qma}] A quantum state $\ket{\psi}$ is semi-classical whenever it can be written as  
\begin{align*}
    \ket{\psi} = \frac{1}{\sqrt{|S|}} \sum_{x \in S} \ket{x},
\end{align*}
for any non-empty subset $S \subseteq \{0,1\}^n$ with $|S|=\mO(\poly(n))$ .
\end{definition}
Then the problem we consider in this paper is\footnote{Note that the promise on the guiding state in our definition is in terms of the fidelity instead of the overlap. Since the fidelity is the overlap squared, the results for both definitions are directly related.}:
\begin{definition} [Guided Local Hamiltonian Low Energy] $\glh$ \\
\textbf{Input:} A $k$-local Hamiltonian $H$ with $\|H\|\leq 1$ acting on $n$ qubits, and a semi-classical quantum state $u \in \mathbb{C}^{2^n}$, threshold parameters $a,b \in \mathbb{R}$ such that $b-a \geq \delta > 0$.\\
\textbf{Promises :} $\norm{\Pi_{c} u}^2 \geq \zeta$, where $\Pi_{c}$ denotes the projection on the subspace spanned by the $c$th eigenstate, ordered in order of non-decreasing energy, of $H$, and either $\lambda_c(H) \leq a$ or $\lambda_c(H) \geq b$ holds.\\
\textbf{Output:} \begin{itemize}
    \item If $\lambda_c(H) \leq a$, output {\sc yes}.
    \item If $\lambda_c(H) \geq b$, output {\sc no}.
\end{itemize}
\end{definition}

\subsection{Results}
In~\cite{gharibian2021dequantizing} the authors show that $\glh$ in the ground state setting (i.e. with $c=0$) is $\BQP$-complete for $k \geq 6$ and $\zeta \in (1/\poly,1/2-\Omega(1/\poly(n))$, and $\delta = 1/\poly(n))$~\cite{gharibian2021dequantizing}. In their construction, they only consider guiding states which are of the form of so-called \textit{semi-classical states}~\cite{grilo2015qma}.\footnote{Since these states are a class of quantum states to which sampling access (being able to compute individual amplitudes of computational basis states, as well as sample according to the squares of the amplitudes) can be efficiently provided classically, the problem defined with such guiding states allows a direct and fair comparison between classical and quantum algorithms.}

\noindent In this work, we generalize and strengthen their results by showing that the problem remains $\BQP$-complete for 2-local Hamiltonians, when the fidelity of the guiding state with the true groundstate is substantially larger, and when one considers eigenstates above the groundstate, whilst still considering semi-classical guiding states in the construction. To be precise, in this paper we prove the following:

\begin{theorem}[$\BQP$-hardness of $\mathsf{GLHLE}$]\label{thm:main_result}\label{thm:main_result1}
$\mathsf{GLHLE}(k,c,\zeta,\delta)$ is $\BQP$-hard for $k \geq 2$, $0 \leq c \leq \mO(\poly(n))$, $\zeta = \mO(1-1/\poly(n))$, and $\delta = 1/\Omega(\poly(n))$.
\end{theorem}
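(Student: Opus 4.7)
The plan is to prove Theorem~\ref{thm:main_result} by composing three successive modifications of the Gharibian--Le Gall $\BQP$-hardness construction for the ground-state guided Hamiltonian problem: boosting the guiding-state fidelity to $\zeta = 1 - 1/\poly(n)$, reducing the locality to $k = 2$, and extending to excited states. The ordering matters because the excited-state extension will crucially rely on the high-fidelity guarantee.

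For the fidelity improvement I would apply a \emph{circuit padding} trick. Starting from a $\BQP$ verifier $V$ of depth $T = \poly(n)$ with exponentially amplified error (Fact~\ref{fact:er}), prepend $N = \poly(n)$ identity gates to $V$ before applying the Kitaev circuit-to-Hamiltonian reduction. The padded history state $\ket{\eta} = \frac{1}{\sqrt{N+T+1}}\sum_{t=0}^{N+T}\ket{\psi_t}\ket{t}$ satisfies $\ket{\psi_t}=\ket{x,0\cdots 0}$ for all $t\leq N$, so the semi-classical state $\ket{u} = \frac{1}{\sqrt{N+1}}\sum_{t=0}^{N}\ket{x,0\cdots 0}\ket{t}$ has $|\braket{u}{\eta}|^2 = (N+1)/(N+T+1) \geq 1 - 1/\poly(n)$ for $N$ a sufficiently large polynomial. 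Combined with error reduction, $\ket{u}$ has fidelity $\geq \zeta$ with the Hamiltonian's ground state in both YES and NO instances. Reducing the locality from $k = 5$ to $k = 2$ is then handled by the standard Kempe--Kitaev--Regev (or Oliveira--Terhal) perturbation gadgets, which preserve an inverse-polynomial spectral gap and incur only $1/\poly(n)$ fidelity loss between $\ket{u}$ and the gadgetized ground state.

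For the excited-state extension, let $H$ be the $2$-local high-fidelity Hamiltonian from above, and append $m = O(\log c)$ ancillary qubits with a $1$-local diagonal Hamiltonian $H_{\mathrm{anc}}$ whose $c+1$ smallest eigenvalues $\mu_0 < \mu_1 < \cdots < \mu_c$ are packed in an interval of width $\ll \delta$ and whose remaining eigenvalues lie at least $3$ above $\mu_c$ (achievable by one extra ancilla qubit with large coefficient plus a small $1$-local diagonal perturbation). Define the $2$-local Hamiltonian $H' = H \otimes I_{\mathrm{anc}} + I_H \otimes H_{\mathrm{anc}}$ with semi-classical guiding state $\ket{u}\otimes\ket{\phi_c}$, where $\ket{\phi_c}$ is the computational-basis eigenstate of $H_{\mathrm{anc}}$ at eigenvalue $\mu_c$. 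Applying the Courant--Fischer variational principle to the $(c+1)$-dimensional trial subspace $\mathrm{span}\{\ket{u}\otimes\ket{\phi_j}\}_{j=0}^{c}$ yields $\lambda_c(H') \leq \bra{u}H\ket{u} + \mu_c \leq \lambda_0(H) + (1-\zeta)\|H\| + \mu_c$, so $\lambda_c(H') \leq a + \mu_c + 1/\poly(n)$ in YES instances; in NO instances every eigenvalue of $H'$ is $\geq \lambda_0(H) + \mu_0 \geq b + \mu_0$. Shifting the thresholds by $\mu_c$ and $\mu_0$ respectively preserves an $\Omega(\delta)$ gap.

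The main obstacle lies in establishing the fidelity promise $\norm{\Pi_c(\ket{u}\otimes\ket{\phi_c})}^2 \geq \zeta$ in \emph{NO} instances, where the spectrum of $H$ above $\lambda_0(H)$ is not controlled by the basic Kitaev construction and the $c$-th eigenstate of $H'$ may mix contributions from higher $\ket{\psi_i(H)}\otimes\ket{\phi_j}$ with $i\geq 1$. This is exactly where property (ii) becomes indispensable: because $\ket{u}$ is $1/\poly(n)$-close to $\ket{\psi_0}$, the trial state $\ket{u}\otimes\ket{\phi_c}$ is inverse-polynomially close to the genuine eigenstate $\ket{\psi_0}\otimes\ket{\phi_c}$ of $H'$ at energy $\lambda_0(H)+\mu_c$, and the constant-size separation between the $c+1$ lowest eigenvalues of $H_{\mathrm{anc}}$ and the rest allows an Eckart--Young-type perturbation argument to force the $c$-th eigenspace of $H'$ to absorb nearly all of this test state's weight, yielding the required overlap lower bound. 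With only $\zeta \sim 1/2$, the trial state would be too far from any eigenstate of $H'$ for this perturbative argument to go through, which is precisely why (iii) is only made possible after (ii) is established.
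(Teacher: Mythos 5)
Your high-level skeleton (pre-idling for fidelity, $2$-locality, then excited states via an ancilla register, with the high fidelity essential for the last step) matches the paper, but there is a genuine gap at the very first step, and it propagates. Padding the circuit with $N=\poly(n)$ identity gates plus error reduction only gives $|\braket{u}{\eta}|^2 \geq 1-1/\poly(n)$ with the \emph{history state} $\ket{\eta}$; it does not give high fidelity of $\ket{u}$ with the true ground state, and in particular not in the {\sc no} case. There, $\bra{\eta}H\ket{\eta}\approx 1/T$ while $\lambda_0(H)$ is only guaranteed to be $\Omega(1/T^3)$, so the ground state may be almost orthogonal to $\ket{\eta}$, and nothing in your argument rules this out ("combined with error reduction, $\ket{u}$ has fidelity $\geq\zeta$ \ldots in both YES and NO instances" is precisely the crux, not a consequence of error reduction). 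The paper's essential move is missing: rescale the non-output terms ($H_{\text{in}},H_{\text{clock}},H_{\text{prop}}$) by a polynomially large $\Delta$ so that every state orthogonal to $\ket{\eta}$ has energy at least $\Delta$ (Lemma~\ref{lem:orthogonal_states}), which yields $|\braket{\phi}{\eta}|^2\geq 1-1/\Delta$ in \emph{both} cases (Lemma~\ref{lem:fidelity_eta_gs}) and a spectral gap $\gamma(H)\geq\Delta-2$ (Corollary~\ref{lem:specgapH}). This gap amplification is exactly what allows $\zeta$ to exceed $1/2$ without the Gharibian--Le Gall block-encoding (which is what had capped $\zeta$ below $1/2$ and raised the locality); without it the promise $\norm{\Pi_0\ket{u}}^2\geq\zeta$ is simply not established for {\sc no} instances.

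The later steps inherit this hole: your claim that the perturbative gadgets "incur only $1/\poly(n)$ fidelity loss" requires an inverse-polynomial lower bound on the spectral gap of the pre-gadget Hamiltonian (this is what Corollary~\ref{lem:specgapH} together with the paper's new eigenstate-wise fidelity bound, Lemma~\ref{lem:fidelity_pert}, supplies), and your excited-state step needs $\mu_c-\mu_0$ to be smaller than $\gamma(H)$ in both cases so that the $c$th eigenstate of $H'=H\otimes I + I\otimes H_{\text{anc}}$ is exactly $\ket{\psi_0}\otimes\ket{\phi_c}$ --- no Eckart--Young-type perturbation argument is needed at all, since $H'$ is non-interacting and its eigenvectors are exact products, but the gap bound is again indispensable and unavailable in your proposal. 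That said, modulo this missing ingredient your excited-state construction is a genuinely different and arguably simpler route than the paper's, which builds a $3$-local block Hamiltonian $H^{(c)}$ and then applies a $3$-to-$2$ perturbative gadget requiring the new fidelity lemma for excited eigenstates; a commuting $1$-local ancilla term keeps everything $2$-local directly. Two fixable details there: a $1$-local diagonal $H_{\text{anc}}$ can only place a power-of-two number of levels (with multiplicity) in a low band, so rather than engineering exactly $c+1$ low levels with the rest "at least $3$ above", simply pack all $2^{\lceil\log(c+1)\rceil}$ levels in a band of width $\ll\min(\delta,\gamma(H))$ and shift the thresholds; and note that for the $c=0$ case the paper uses the Kempe--Kitaev--Regev $2$-local circuit Hamiltonian directly rather than $5$-local plus gadgets, which is why it needs no gadget analysis at all until the excited-state step.
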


\begin{restatable}[Containment in $\BQP$ of $\mathsf{GLHLE}$]{theorem}{bqpcontainment}
\label{thm:main_result2}
~\
\begin{enumerate}[label=(\roman*)]
\item $\mathsf{GLHLE}(k,0,\zeta,\delta)$ is contained in $\BQP$ for $k=\mO(\log(n))$, $\zeta = \Omega(1/\poly(n))$, and $\delta = 1/\mO(\poly(n))$. 
\item $\mathsf{GLHLE}(k,c,\zeta,\delta)$ for $c \geq 1$ is contained in $\BQP$ when $k=\mO(\log(n))$, $\zeta = \frac12 + \Omega(1/\poly(n))$, and $\delta = 1/\mO(\poly(n))$. 
\end{enumerate}
\end{restatable}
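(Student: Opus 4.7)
The plan for both parts is to apply quantum phase estimation (QPE) to the guiding state $u$, combined with Hamiltonian simulation of the $O(\log n)$-local Hamiltonian $H$. Since $u = |S|^{-1/2}\sum_{x\in S}\ket{x}$ is a uniform superposition over a polynomially sized set, I can prepare it efficiently on $n$ qubits using standard state-loading techniques. For the Hamiltonian simulation, each local term now acts on a Hilbert space of dimension $2^k = \poly(n)$ rather than constant size, but assuming $H$ is supplied as polynomially many such terms (the natural input model for a local Hamiltonian), I can implement $e^{-iHt}$ to inverse-polynomial precision in polynomial time using, e.g., a Suzuki--Trotter decomposition or qubitization. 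Composing these ingredients, QPE at precision $\delta/3$ applied to $u$ yields, with all but exponentially small probability, an estimate within $\delta/3$ of some eigenvalue $\lambda_j$ of $H$, where the particular $\lambda_j$ is sampled with probability close to $\norm{\Pi_j u}^2$.

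For part (i), with $c = 0$ and $\zeta = \Omega(1/\poly(n))$, I would run QPE $N = \poly(n)/\zeta$ times and output {\sc yes} iff at least one returned estimate lies below the threshold $(a+b)/2$. In a {\sc yes} instance $\lambda_0 \leq a$, so each run produces a below-threshold estimate with probability at least $\zeta$, and the probability of never seeing one over $N$ trials is $(1-\zeta)^N$, exponentially small. In a {\sc no} instance every eigenvalue of $H$ is at least $b$, so every QPE outcome exceeds $b - \delta/3 > (a+b)/2$ with all but exponentially small probability, and a union bound over the $N$ runs gives the correct answer with high probability.

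For part (ii), the target energy may lie in the bulk of the spectrum, so simply hunting for the smallest observed estimate fails. Instead I would run QPE $\poly(n)$ times and, for each run, record whether the estimate lies below $(a+b)/2$. In a {\sc yes} instance the probability of a below-threshold outcome per run is at least $\norm{\Pi_c u}^2 \geq \zeta > 1/2 + 1/\poly(n)$, because $\lambda_c \leq a < (a+b)/2$; in a {\sc no} instance this probability is at most $1 - \norm{\Pi_c u}^2 \leq 1 - \zeta < 1/2 - 1/\poly(n)$, since $\lambda_c \geq b > (a+b)/2$ forces the $c$-th eigenstate's weight onto the ``above threshold'' side. A Chernoff bound on the empirical fraction of below-threshold outcomes then distinguishes the two cases from $\poly(n)$ samples; output {\sc yes} iff that fraction exceeds $1/2$.

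The main obstacle I anticipate is the careful treatment of eigenvalues lying close to the threshold $(a+b)/2$: QPE outcomes for such eigenstates can land on either side, potentially biasing the counts in part (ii). However, the promise $b - a \geq \delta$ together with a QPE precision much smaller than $\delta$ ensures that any eigenvalue $\leq a$ reports ``below'' and any $\geq b$ reports ``above'' with overwhelming probability, so only the weight of $u$ on the $c$-th eigenstate controls the two probabilities above, and both arguments go through cleanly. A secondary subtlety is confirming that $k = O(\log n)$ locality does not break the Hamiltonian-simulation step; this follows from standard product-formula analyses provided the input explicitly lists the $\poly(n)$ local terms.
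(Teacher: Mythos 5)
Your proposal is correct and follows essentially the same route as the paper: run quantum phase estimation (via Hamiltonian simulation of the $\mO(\log n)$-local $H$) on the semi-classical guiding state with precision below $\delta$, using for $c=0$ that in a {\sc no} instance no eigenvalue lies below the threshold, and for $c\geq 1$ that $\zeta > \frac12 + 1/\poly(n)$ makes each run report the correct side of the threshold with probability bounded above $\frac12$, amplified by a Chernoff bound. Your below-threshold counting in part (ii) is just the paper's majority vote in different words, so there is no substantive difference.
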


The above theorems follow straightforwardly from the results of Sections~\ref{sec:2local},~\ref{sec:excited}, and~\ref{sec:containment}. The reason for the separation of Theorem~\ref{thm:main_result2} into parts (i) and (ii) is as follows: when the fidelity of the guiding state with the target eigenstate is sufficiently above $1/2$ (in this case $1/2 + 1/\poly(n)$), then by inputting this state to quantum phase estimation and measuring, one can choose the most frequently observed output to be the estimate of the energy of the target state (since we know that the fidelity with any other eigenstate will be smaller than the fidelity with the target state). On the other hand, if the fidelity is not sufficiently above $1/2$, then it might be the case that the guiding state has significant fidelity with other eigenstates, and that the energies of these states will be measured with equal or higher probability than that of the target eigenstate. In this case, it is impossible to decide (in polynomial time) which energy corresponds to the target state, and which to the other, unwanted states, unless the target state is the groundstate (case (i)), in which case we can employ the variational principle and simply choose the smallest energy. We leave as an open problem the containment in $\BQP$ of the case $c>0$ and $\zeta = \Omega(1/\poly(n))$.

\

\noindent Put together, we obtain the following result. Note that completeness for the case $c>0$ is only made possible due to the fact that the problem remains $\BQP$-hard for a fidelity as large as $1-1/\poly(n)$ (i.e. Theorem~\ref{thm:main_result1}).
\begin{theorem}[$\BQP$-completeness of $\mathsf{GLHLE}$]
$\glh$ is $\BQP$-complete for $k=2$, $c=\mO(\poly(n))$, $\frac12 + 1/\mO(\poly(n)) \leq \zeta \leq 1 - 1/\mO(\poly(n))$, and $\delta = 1/\poly(n)$.
\end{theorem}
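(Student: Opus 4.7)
The plan is simply to compose the two preceding theorems and verify that their parameter regimes overlap in the window described by the statement. For the hardness direction I would invoke Theorem~\ref{thm:main_result1}: it applies already for $k=2$, for any $c$ up to $\mO(\poly(n))$, for $\delta = 1/\poly(n)$, and crucially it supplies a hardness instance at some fidelity $\zeta$ of the form $1 - 1/\poly(n)$. This single value $\zeta$ is contained in the interval $[\tfrac12 + 1/\poly(n),\,1-1/\poly(n)]$ required by the statement, so the lower bound of the claimed regime is realised.

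For the upper bound (containment in $\BQP$) I would split on $c$. When $c=0$ I apply part~(i) of Theorem~\ref{thm:main_result2}: $k=2$ is certainly $\mO(\log n)$, $\delta = 1/\poly(n)$, and any $\zeta \geq \tfrac12 + 1/\poly(n)$ is in particular $\Omega(1/\poly(n))$. When $c \geq 1$ I apply part~(ii) of Theorem~\ref{thm:main_result2}: again $k=2 = \mO(\log n)$ and $\delta = 1/\poly(n)$, while the hypothesis $\zeta \geq \tfrac12 + 1/\poly(n)$ is exactly the required fidelity bound. In both cases the witness state is semi-classical, as stipulated by the problem definition, so no extra work is needed to supply it to the $\BQP$ verifier.

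Combining the two gives $\BQP$-completeness throughout the stated range. There is no real technical obstacle here beyond bookkeeping; the work has already been done in Theorems~\ref{thm:main_result1} and~\ref{thm:main_result2}. The one point worth emphasising in the write-up, which I would flag explicitly, is why the statement requires $\zeta$ to lie strictly above $\tfrac12$: containment for $c \geq 1$ (Theorem~\ref{thm:main_result2}(ii)) needs $\zeta > \tfrac12 + 1/\poly(n)$ in order for quantum phase estimation followed by majority voting to pick out the correct eigenphase, and without the sharpened hardness of Theorem~\ref{thm:main_result1} (which pushes the hard regime all the way up to $\zeta = 1 - 1/\poly(n)$) the hardness and containment windows would not meet for excited states, so completeness for $c > 0$ would not follow at all. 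This is precisely the dependence noted in the statement.
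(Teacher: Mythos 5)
Your proposal is correct and matches the paper's own treatment: the completeness statement is obtained exactly by composing Theorem~\ref{thm:main_result1} (hardness at $\zeta = 1-1/\poly(n)$, which subsumes all smaller $\zeta$ since the fidelity promise only weakens) with Theorem~\ref{thm:main_result2}(i) for $c=0$ and (ii) for $c\geq 1$. Your closing remark about why $\zeta$ must exceed $\tfrac12$ and why the sharpened hardness is needed for the $c>0$ windows to meet is precisely the observation the paper itself makes.
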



\subsection{Simultaneous \& subsequent works} 
\paragraph{Simultaneous work} During preparation of this manuscript, we became aware of a parallel work by Le Gall, Gharibian, Hayakawa, and Morimae~\cite{Gharibian2022improved} which obtains similar results to ours. In particular, they also improve on the work of~\cite{gharibian2021dequantizing} by showing that the guided local Hamiltonian problem remains $\BQP$-hard for $2$-local Hamiltonians, and when the fidelity between the guiding state and groundstate is as large as $1-1/\mO(\poly(n))$, making use of the Schrieffer-Wolf transformation framework~\cite{bravyi2011schrieffer}. They also provide an alternative proof that is similar to ours, based on the projection lemma of~\cite{kempe2006} (Appendix C in~\cite{Gharibian2022improved}). In addition, they also prove that the problem remains hard for a family of physically motivated Hamiltonians, which includes the $XY$ model and the Heisenberg model. They do not consider the generalization to excited states that we do in this work, and hence the results of both papers provide additional results that are complimentary to each other.\footnote{One might say that our results and theirs have fidelity $2/3$.} A merged version of both works has since appeared, and we refer the reader to that paper \cite{Gharibian2022improved}.

\paragraph{Subsequent work} In~\cite{weggemans2023guidable} we consider variants of $\mathsf{GLH}$ in which the guiding state is no longer part of the \textit{input} to the problem, but merely \textit{promised} to exist and satisfy certain constraints (which are somewhat more natural than those considered in this work). We show how for a certain class of guiding states the corresponding guidable local Hamiltonian problem is $\QCMA$-complete in the inverse polynomial precision regime, but is in $\NP$ when the promise-gap is constant, reminiscent of the results that were obtained for the stoquastic frustration-free local Hamiltonian problem in~\cite{Aharonov2019stoquastic} (in which similar promise-gap scaling regimes determine whether the problem is $\MA$-complete or in $\NP$). Finally, we discuss the implications of these results in relation to the quantum PCP conjecture and heuristic ansätze state preparation. 
\section{Ghariban and Le Gall's construction}\label{sec:legall_construction}
In this section we will briefly restate Ghariban and Le Gall's original construction~\cite{gharibian2021dequantizing}. Let $\Pi = (\Pi_{\text{yes}}, \Pi_{\text{no}})$ be a promise problem in $\BQP$, and $x\in \{0,1\}^n$ an input. Let $U= U_T \dots U_1$ be a poly-time uniformly generated quantum circuit consisting of $1$- and $2$-qubit gates $U_i$, deciding on $\Pi$. More precisely, $U$ takes an $n$-qubit input register $A$, and a $r=poly(n)$-qubit work register $B$ and outputs, upon measurement, a 1 on the first qubit with probability at least $\alpha$ (resp. at most $\beta$) if $x \in \Pi_{\text{yes}}$ (resp. $x \in \Pi_{\text{no}}$). By Fact~\ref{fact:er}, we can assume wlog $\alpha = 1 - 2^{-n}$,  $\beta = 2^{-n}$. 

Consider Kitaev's original 5-local clock Hamiltonian~\cite{Kitaev2002ClassicalAQ}, where $C$ denotes the `clock' register consisting of $T=\poly(n)$ qubits:
\begin{align}
    H_{\text{in}} &:= (I- \kb{x}{x})_A \ot (I - \kb{0 \dots 0}{0 \dots 0})_B \ot \kb{0}{0}_C, \nonumber\\
    H_{\text{out}} &:= \kb{0}{0}_{\text{out}} \ot \kb{T}{T}_C, \nonumber\\
    H_{\text{clock}} &:= \sum_{j = 1}^{T} \kb{0}{0}_{C_{j}} \ot \kb{1}{1}_{C_{j+1}}, \nonumber\\
    H_{\text{prop}} &:= \sum_{t = 1}^T H_t \qquad \text{where} \qquad \nonumber\\
    &\qquad H_t := -\frac{1}{2} U_t \ot \kb{t}{t-1}_C -\frac{1}{2} U_t^\dagger \ot \kb{t-1}{t}_C + \frac{1}{2} I \ot \kb{t}{t}_C + \frac{1}{2} I \ot \kb{t-1}{t-1}_C. \label{eq:propagation_hamiltonian}
\end{align}
The ground state energy of the Hamiltonian $H^{(5)} = H_\text{in} + H_\text{out} + H_\text{clock} + H_\text{prop}$ has the following property:
\begin{itemize}
    \item If $U$ accepts with at least probability $\alpha$, then $\lambda_{0} (H^{(5)}) \leq \frac{1-\alpha}{T}$.
    \item If $U$ accepts with at most probability $\beta$, then $\lambda_{0} (H^{(5)}) \geq \Omega(\frac{1-\beta}{T^3})$.
\end{itemize}
$H^{(5)}$ can be split into two separate terms:
\begin{align*}
    H_1 &:=  H_{\text{in}} +  H_{\text{clock}} +  H_{\text{prop}} \\
    H_2 &:= H_{\text{out}}
\end{align*}
such that the history state
\begin{align}
    \ket{\eta} = \frac{1}{\sqrt{T}}\sum_{t=1}^{T} U_t \dots U_1 \ket{x}_A \ket{0\dots0}_B \ket{t}_C,
    \label{eq:hist_state}
\end{align}
spans the null space $\mathcal S$ of $H_1$.  Consider the following semi-classical guiding state  in~\cite{gharibian2021dequantizing}
\[
    \ket{u} = \frac{1}{\sqrt{T}}\sum_{t = 1}^N \ket{x}_A\ket{0\dots 0}_B\ket{t}_C.
\]
In general this guiding state has at most  $\mO(1/(TN))$ fidelity with the history state and therefore an even smaller fidelity with the actual ground state of $H^{(5)}$. In order to meet the promise that $\norm{\Pi_{H'}\ket{u}}^2 \geq \zeta$ in both the {\sc yes}- and {\sc no}-case, Gharibian and Le Gall use the following tricks:
\begin{itemize}
    \item Since the history state in Eq.~\eqref{eq:hist_state} uniquely spans the null space of $H_1$ (of which all terms are positive semi-definite) in the $\BQP$-setting, and a bound is known on the energy of all non-zero eigenstates, then weighing $H_1$ with a large (but only polynomial) prefactor $\Delta$ allows one to increase the fidelity of the actual ground state with the history state.
    \item By \textit{pre-idling} the circuit $U$ that is in the clock Hamiltonian  -- i.e. applying $M$ identity gates before the first actual gate -- the fidelity between $\ket{u}$ and the history state can be increased. This increases the number of gates from $T$ to $T'=T+M$. Denote the weighted and pre-idled Hamiltonian as $\hat{H}^{(5)}$. Also, define 
    \begin{align*}
        \hat{\alpha} : = \frac{1-\alpha}{T'+1} \quad \text{and} \quad \hat{\beta}:=\Omega\left(\frac{1-\sqrt{\beta}}{{T'}^3}\right).
    \end{align*}
    \item Finally, by block-encoding $\hat{H}^{(5)}$ into a larger Hamiltonian $H^{(6)}$, which acts on $n+r+T+1$ qubits (adding another single-qubit register $D$), one can add another Hamiltonian (in their case a scaled identity term) in another block such that the ground space in case of a {\sc no}-case is trivial, only increasing the locality of the Hamiltonian in the construction by 1. By setting this specific qubit in the guiding state to the $\ket{+}$ state, one ensures that it has fidelity with both the {\sc no}- and {\sc yes}-cases.
\end{itemize}

The final Hamiltonian is then
\begin{align}
    H^{(6)}_{ABCD} := \frac{\hat{\alpha}+\hat{\beta}}{2} I_{ABC} \ot \kb{0}{0}_D + \hat{H}^{(5)}_{ABC} \ot \kb{1}{1}_D,
\end{align}
where $\hat{H}^{(5)}_{ABC} = \Delta \left( H_{\text{in}} +  H_{\text{clock}} +  H_{\text{prop}}   \right) + H_{\text{out}}$. The guiding state becomes
\begin{align}
    \ket{u} := \ket{x}_A \ket{0 \dots 0}_B \left( \frac{1}{\sqrt{M}} \sum_{t=1}^M \ket{t} \right)_C \ket{+}_D.
\end{align}
Since the overall construction starts from a $5$-local Hamiltonian, the last trick increases the locality to $6$ and restricts the fidelity to be at most $1/2-\Omega(1/\poly(n))$.

\section{A 2-local construction with increased fidelity}\label{sec:2local}
To improve Ghariban and Le Gall's results in terms of the locality and fidelity parameter range, we make the following two observations:
\begin{itemize}
    \item Rather than using Kitaev's 5-local circuit-to-Hamiltonian, one can instead adapt the construction of a 2-local Hamiltonian introduced by Kempe, Kiteav, and Regev~\cite{kempe2006}. We modify their construction by removing the witness register (since we care about acceptance only for a fixed input $x$), and by blending it with the aforementioned tricks of pre-idling and gap amplification. We also prove a lower bound on the spectral gap of that Hamiltonian, which was not needed in~\cite{kempe2006} but is needed for our application.
    \item One can show that the gap amplification arising from the scaling  of $H_1$ increases not just the fidelity of the ground state with the history state in the {\sc yes}-case, but also in the {\sc no}-case. This observation allows one to circumvent the splitting of both cases into different blocks of a larger Hamiltonian, preserving the locality of the initial construction. 
\end{itemize}
We begin by briefly reviewing the 2-local construction from~\cite{kempe2006}. To turn the usual clock Hamiltonian into a 2-local one,~\cite{kempe2006} first assume that the initial $T$-gate circuit $U$ is composed only of single-qubit gates and the controlled-$Z$ gate $CZ$,\footnote{This is wlog since single qubit gates combined with $CZ$ are universal for quantum computation.} and that each $CZ$ gate is conjugated by single-qubit $Z$ gates acting on each qubit\footnote{Again wlog, since these gates commute with $CZ$ and therefore cancel to the identity.}, and finally that the $CZ$ gates are applied only at regular intervals. We will make the same assumption, except combine it with pre-idling from~\cite{gharibian2021dequantizing}. Precisely, this means that we construct a new circuit $V := U I_{M} \dots I_1$ from $U$ consisting of $T+M$ gates, where the first $M$ gates are the identity, the $CZ$ gates are applied only at time-steps $L+M, 2L+M, \dots, T_2L+M$ for some integer $L < M$ and with $T_2$ the total number of such gates, and the rest of the (non-identity) gates are single-qubit. Let $T_1 = \{1,\dots,T+M\}\setminus\{L+M,2L+M,\dots,T_2L+M\}$ be the times at which single qubit gates are applied (including the pre-idling identity gates). Then following~\cite{kempe2006}, our Hamiltonian is 
\begin{align}
      H = J_{\text{in}} H_{\text{in}} + J_{\text{clock}} H_{\text{clock}} + J_1 H_{\text{prop1}} + J_2 H_{\text{prop2}} + (T+M)H_{\text{out}}
      \label{eq:H_2local}
\end{align}
with $H_{\text{in}}$, $H_{\text{clock}}$, $H_{\text{out}}$ defined as in Section~\ref{sec:legall_construction}, and $J_{\text{in}}, J_{\text{clock}}, J_1, J_2$ coefficients to be chosen later. The two propagation terms are defined differently as 
\[
    H_{\text{prop1}} = \sum_{t \in T1} H_{\text{prop},t} \qquad\qquad H_{\text{prop2}} = \sum_{l=1}^{T_2} (H_{\text{qubit},lL+M} + H_{\text{time},lL+M})
\]
with 
\[
    H_{\text{prop},t} = \frac12 \left(I \ot \proj{10}_{t,t+1} + I \ot \proj{10}_{t-1,t} - U_t \ot \ketbra{1}{0}_t - U_t^\dagger \ot \ketbra{0}{1}_t \right)
\]
for $t \in T_1 \cap \{2,\dots,T-1\}$ and
\[
    H_{\text{prop,1}} = \frac12 \left(I \ot \proj{10}_{1,2} + I \ot \proj{0}_{1} - U_1 \ot \ketbra{1}{0}_1 - U_1^\dagger \ot \ketbra{0}{1}_1 \right)
\]
\[
    H_{\text{prop,T}} = \frac12 \left(I \ot \proj{1}_{T} + I \ot \proj{10}_{T-1,T} - U_T \ot \ketbra{1}{0}_T - U_T^\dagger \ot \ketbra{0}{1}_T \right)\,.
\]
Let $f_t$ and $s_t$ be the first and second qubit acted on by the gate $CZ$ at time $t$, and define
\[
    H_{\text{qubit},t} = \frac12 \left(-2\proj{0}_{f_t} -2\proj{0}_{s_t} + \proj{1}_{f_t} + \proj{1}_{s_t} \right) \ot (\ketbra{1}{0}_t + \ketbra{0}{1}_t)
\]
\begin{eqnarray*}
    H_{\text{time},t} = \frac18 I \ot &(&\proj{10}_{t,t+1} + 6\proj{10}_{t+1,t+2} + \proj{10}_{t+2,t+3}  \\
    &+& 2\ketbra{11}{00}_{t+1,t+2} + 2\kb{00}{11}_{t+1,t+2} \\
    &+& \kb{1}{0}_{t+1} + \kb{0}{1}_{t+1} + \kb{1}{0}_{t+2} + \kb{0}{1}_{t+2} \\
    &+& \proj{10}_{t-3,t-2} + 6\proj{10}_{t-2,t-1} + \proj{10}_{t-1,t} \\
    &+& 2\kb{11}{00}_{t-2,t-1} + 2\kb{00}{11}_{t-2,t-1} \\
    &+& \kb{1}{0}_{t-2} + \kb{0}{1}_{t-2} + \kb{1}{0}_{t-1} + \kb{0}{1}_{t-1}
    ) .
\end{eqnarray*}
Finally, let $\ket{\hat{t}}$ denote the valid clock state for time $t$, i.e.
\begin{equation}\label{eq:unary_clock}
    \ket{\hat{t}} :=  |\underbrace{1\dots1}_{t} \underbrace{0\dots0}_{(T+M)-t} \rangle\,.
\end{equation}
These are the states not given an energy penalty by $H_{\text{clock}}$, and as such span its nullspace (we touch on this further below).

\ 
Before we begin, we will need the following lemma from~\cite{kempe2006}, which allows us to bound (from above and below) the smallest eigenvalue of a Hamiltonian of the form $H = H_1 + H_2$ using knowledge of the smallest eigenvalue of $H_1$ restricted to the nullspace of $H_2$. The final claim in the lemma is not in the original statement from~\cite{kempe2006}, but follows trivially from their proof, which we include for completeness.
\begin{lemma}[`Projection lemma' -- Lemma 1 from~\cite{kempe2006}]\label{lem:projection}
    Let $H = H_1 + H_2$ be the sum of two Hamiltonians acting on some Hilbert space $\mathcal{H} = \mathcal{S} + \mathcal{S}^\perp$. The Hamiltonian $H_2$ is such that $\mathcal{S}$ is its zero eigenspace and the eigenvectors in $\mathcal{S}^\perp$ have eigenvalue at least $J > 2\|H_1\|$. Then
    \[
        \lambda_0(H_1|_{\mathcal{S}}) - \frac{\|H_1\|^2}{J - 2\|H_1\|} \leq \lambda_0(H) \leq \lambda_0(H_1|_{\mathcal{S}})\,.
    \]
    If $H_1$ is positive semi-definite, then $\lambda_0(H) = \lambda_0(H_1|_S)$. 
\end{lemma}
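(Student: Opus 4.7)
I would prove the two inequalities separately and then peel off the PSD refinement as a special case.

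The upper bound is essentially a one-line Rayleigh–Ritz argument. Let $|\psi^\star\rangle \in \mathcal{S}$ be a normalised ground state of $H_1|_\mathcal{S}$. Since $\mathcal{S}$ is the zero eigenspace of $H_2$, one has $H_2|\psi^\star\rangle = 0$, so $\langle \psi^\star | H | \psi^\star \rangle = \langle \psi^\star | H_1 | \psi^\star \rangle = \lambda_0(H_1|_\mathcal{S})$, and the variational principle immediately gives $\lambda_0(H) \leq \lambda_0(H_1|_\mathcal{S})$.

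The lower bound is the heart of the argument. I would take a ground state $|\psi\rangle$ of $H$ and write $|\psi\rangle = \alpha|\phi_1\rangle + \beta|\phi_2\rangle$ with $|\phi_1\rangle \in \mathcal{S}$, $|\phi_2\rangle \in \mathcal{S}^\perp$ unit vectors and $|\alpha|^2 + |\beta|^2 = 1$. The $H_2$ contribution is immediate from the block structure and the spectral hypothesis: $\langle\psi|H_2|\psi\rangle \geq |\beta|^2 J$. For the $H_1$ piece, I would expand $\langle\psi|H_1|\psi\rangle$ into its four cross and diagonal terms, use $\langle\phi_1|H_1|\phi_1\rangle \geq \lambda_0(H_1|_\mathcal{S})$ on the $|\alpha|^2$ piece, and absorb the remaining three into the pessimistic bound $-(2|\alpha||\beta| + |\beta|^2)\|H_1\|$ via $|\langle\phi_1|H_1|\phi_2\rangle| \leq \|H_1\|$ and $|\langle\phi_2|H_1|\phi_2\rangle| \leq \|H_1\|$. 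Substituting $|\alpha|^2 = 1-|\beta|^2$, using $|\alpha|\leq 1$, and $|\lambda_0(H_1|_\mathcal{S})| \leq \|H_1\|$, this rearranges into
\[
\langle \psi | H | \psi \rangle \;\geq\; \lambda_0(H_1|_\mathcal{S}) + |\beta|^2 (J - 2\|H_1\|) - 2|\beta|\,\|H_1\|.
\]
Minimising the right-hand side over $|\beta| \in [0,1]$ is a scalar quadratic optimisation whose minimum is $-\|H_1\|^2/(J - 2\|H_1\|)$, attained at an interior critical point precisely because the hypothesis $J > 2\|H_1\|$ makes the quadratic coefficient strictly positive. This delivers the claimed lower bound.

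For the last sentence, I would specialise: when $H_1 \succeq 0$, both $H_1$ and $H_2$ are PSD so $H \succeq 0$, and in the relevant Kitaev-style case $\lambda_0(H_1|_\mathcal{S}) = 0$ any normalised witness $|\psi^\star\rangle \in \mathcal{S}$ with $\langle\psi^\star|H_1|\psi^\star\rangle = 0$ must lie in the kernel of $H_1$ (by positive semi-definiteness), hence in the common kernel of $H_1$ and $H_2$, forcing $\lambda_0(H) = 0 = \lambda_0(H_1|_\mathcal{S})$. I expect the main obstacle to be the cross-term bookkeeping in the lower bound: the inequality $|\langle\phi_1|H_1|\phi_2\rangle| \leq \|H_1\|$ is a rather crude Cauchy–Schwarz style bound, and the delicate point is verifying that, once everything is folded together, the quadratic-in-$|\beta|$ optimisation really does produce the clean denominator $J - 2\|H_1\|$ rather than something weaker like $J - \|H_1\|$ or $J$, which is precisely the role the gap assumption $J > 2\|H_1\|$ is designed to play.
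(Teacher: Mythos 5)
Your upper bound and the quantitative lower bound are exactly the paper's own argument: Rayleigh--Ritz with a ground state of $H_1|_{\mathcal S}$ for the upper bound, and for the lower bound the same decomposition $\ket{\psi}=\alpha\ket{\phi_1}+\beta\ket{\phi_2}$, the bound $\bra{\psi}H_2\ket{\psi}\geq|\beta|^2J$, crude $\|H_1\|$ bounds on the cross and $\mathcal S^\perp$-diagonal terms, giving $\bra{\psi}H\ket{\psi}\geq\lambda_0(H_1|_{\mathcal S})+(J-2\|H_1\|)|\beta|^2-2\|H_1\||\beta|$ and the minimum value $-\|H_1\|^2/(J-2\|H_1\|)$. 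This is all correct; the only nitpick is that the critical point $|\beta|=\|H_1\|/(J-2\|H_1\|)$ lies outside $[0,1]$ when $J<3\|H_1\|$, but since the unconstrained minimum still lower-bounds the constrained one the inequality is unaffected (the paper itself misstates the minimizer, so this is cosmetic).

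The divergence is the final sentence. You prove $\lambda_0(H)=\lambda_0(H_1|_{\mathcal S})$ for positive semi-definite $H_1$ only under the extra assumption $\lambda_0(H_1|_{\mathcal S})=0$, via the kernel argument, whereas the lemma asserts it for \emph{every} PSD $H_1$; the paper argues the general case by rerunning the lower-bound chain and dropping not only the $\mathcal S^\perp$-diagonal term (legitimate for PSD $H_1$) but also the $-2\|H_1\||\beta|$ cross term, which positive semi-definiteness alone does not justify. So, strictly, your proposal does not cover the statement as written -- but your caution is well placed, because the unrestricted claim is in fact false: take $H_1=2\proj{+}$, $H_2=5\proj{1}$, $\mathcal S=\mathrm{span}\{\ket{0}\}$, so $J=5>2\|H_1\|=4$ and $H_1\succeq 0$, yet $\lambda_0(H_1|_{\mathcal S})=1$ while $\lambda_0(H_1+H_2)=(7-\sqrt{29})/2<1$. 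Equality does hold in your restricted setting, since a zero-energy state of $H_1|_{\mathcal S}$ is annihilated by $H_1$ (killing the cross term) and by $H_2$, so the honest conclusion is that the lemma's last sentence needs an additional hypothesis of this kind (or an analogous no-coupling condition); you should flag that discrepancy explicitly rather than present the kernel argument as a proof of the sentence as stated, and note that wherever the paper invokes the exact-equality refinement with $\lambda_0(H_1|_{\mathcal S})>0$ one should fall back on the general bound with its $\|H_1\|^2/(J-2\|H_1\|)$ loss.
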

\begin{proof}
Let $\ket{v} \in \cS$ be the eigenvector of $H_1|_\cS$ corresponding to the smallest eigenvalue of $H_1|_\cS$. Then
\[
    \braket{v}{H|v} = \braket{v}{H_1|v} + \braket{v}{H_2|v} = \lambda_0(H_1|_\cS)
\]
since $H_2\ket{v}=0$. This proves the upper bound.

For the lower bound, let $\ket{w} \in \mathcal{H}$ be an arbitrary vector in $\mathcal{H}$, which we can always decompose as $\ket{w} = \alpha_1\ket{w_1} + \alpha_2\ket{w_2}$, where $\ket{w_1}\in\cS$ and $\ket{w_2}\in\cS^\perp$, $\alpha_1 \geq 0,\alpha_2 \geq 0 \in \mathbb{R}$ s.t. $\alpha_1^2 + \alpha_2^2 = 1$. Let $K=\|H_1\|$. Then
\begin{eqnarray*}
    \braket{w}{H|w} &\geq& \braket{w}{H_1|w} + J\alpha_2^2 \\
    &\geq& \braket{w_1}{H_1|w_1} - K\alpha_2^2 - 2K\alpha_2 - K\alpha_2^2 + J\alpha_2^2 \\
    &\geq& \lambda_0(H_1|_\cS) + (J-2K)\alpha_2^2 - 2K\alpha_2\,,
\end{eqnarray*}
which is minimized by $\alpha_2 = J/(J-2K)$. If $H_1$ is positive semi-definite, then 
\begin{eqnarray*}
    \braket{w}{H|w} &\geq& \braket{w}{H_1|w} + J\alpha_2^2 \\
    &\geq& \braket{w_1}{H_1|w_1} - K\alpha_2^2 + J\alpha_2^2 \\
    &\geq& \lambda_0(H_1|_\cS) + (J-K)\alpha_2^2\,,
\end{eqnarray*}
which is minimized by $\alpha_2 = 0$ (since $J > K$ by assumption) and so we obtain $\braket{w}{H|w} \geq \lambda_0(H_1|_\cS)$, matching the upper bound.
\end{proof}
\noindent For instance, if $J \geq 8\|H_1\|^2 + 2\|H_1\| = \poly(\|H_1\|)$ we get $\lambda_0(H_1|_{\mathcal{S}}) - 1/8 \leq \lambda_0(H) \leq \lambda_0(H_1|_{\mathcal{S}})$.

\ 

\noindent By closely following the proof in~\cite{kempe2006}, we prove the following:
\begin{restatable}{lemma}{lemEigenvalues}\label{lem:eigenvalues}
Suppose the circuit $U$ accepts with probability $1-\epsilon$ on input $\ket{x,0}$. Then the smallest eigenvalue of $H$ is unique and satisfies $\epsilon-\frac14 \leq \lambda_0(H) \leq \epsilon$. Moreover, the state
    \begin{equation}\label{eq:history_state}
        \ket{\eta} = \frac{1}{T+M} \sum_{t=1}^{T+M} U_t\cdots U_1\ket{x,0} \otimes \ket{\hatt}
    \end{equation}
    satisfies $\braket{\eta}{ H | \eta} = \epsilon$, where $M = \poly(T)$ and $\hatt$ represents correct unary encoding of the integer $t$ as per Eq.~\eqref{eq:unary_clock}.
\end{restatable}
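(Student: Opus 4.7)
The plan is to apply the projection lemma (Lemma~\ref{lem:projection}) with the decomposition $H = H_1 + H_2$, where $H_1 = (T+M)H_{\text{out}}$ and $H_2 = J_{\text{in}} H_{\text{in}} + J_{\text{clock}} H_{\text{clock}} + J_1 H_{\text{prop1}} + J_2 H_{\text{prop2}}$ is the penalty Hamiltonian whose nullspace $\cS$ isolates the valid history-state subspace. Choosing the coefficients $J_{\text{in}}, J_{\text{clock}}, J_1, J_2$ as sufficiently large polynomials in $T+M$, one can ensure that the smallest non-zero eigenvalue of $H_2$ exceeds $J := 8(T+M)^2 + 2(T+M)$, which (via Lemma~\ref{lem:projection} applied to $H$ with $\|H_1\| \leq T+M$) makes the additive correction $\|H_1\|^2/(J-2\|H_1\|)$ at most $1/4$ and yields $\lambda_0(H_1|_{\cS}) - 1/4 \leq \lambda_0(H) \leq \lambda_0(H_1|_{\cS})$.

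The first substantive step is to identify $\cS$ as the one-dimensional subspace spanned by $\ket{\eta}$. Since each constituent term of $H_2$ is positive semi-definite, $\cS$ is the intersection of their individual nullspaces: $H_{\text{clock}}$ restricts to the valid unary clock encodings $\ket{\hatt}$; $H_{\text{prop1}}$ together with $H_{\text{prop2}}$ enforces correct propagation through single-qubit and $CZ$ gates respectively, following the Kempe--Kitaev--Regev analysis~\cite{kempe2006} adapted to the pre-idled gate sequence (the prefix $I_1,\dots,I_M$ are single-qubit identities and thus handled by $H_{\text{prop1}}$); and $H_{\text{in}}$ fixes the $t=0$ branch to $\ket{x,0}$, which together with determinism of the propagation pins down $\ket{\eta}$ uniquely and immediately gives uniqueness of $\lambda_0(H)$. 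I would then compute $\braket{\eta}{H|\eta}$: since $\ket{\eta} \in \cS$, all penalty terms vanish, and $H_{\text{out}}$ contributes only through the $t = T+M$ branch of $\ket{\eta}$. Using the fact that the pre-idling gates are identities, the state at the final time equals $U\ket{x,0}$, whose overlap with the reject subspace $\ket{0}_{\text{out}}$ is exactly $\epsilon$, giving $(T+M)\braket{\eta}{H_{\text{out}}|\eta} = \epsilon$ and hence $\braket{\eta}{H|\eta} = \epsilon$, which upper-bounds $\lambda_0(H_1|_{\cS})$.

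The main obstacle is the spectral-gap lower bound on $H_2$: unlike in~\cite{kempe2006}, where only a constant promise-gap on the resulting $2$-local Hamiltonian problem was required, here we need the smallest non-zero eigenvalue of $H_2$ to exceed the polynomially-growing threshold $J$. I plan to obtain this bound by an iterated application of Lemma~\ref{lem:projection} to the constituent pieces of $H_2$: $J_{\text{clock}} H_{\text{clock}}$ has constant spectral gap and allows reduction to the valid-clock subspace; on this subspace $H_{\text{prop1}} + H_{\text{prop2}}$ acts (up to the KKR gadget corrections) as a random walk on the time register twisted by the circuit unitary, whose spectral gap is $\Omega(1/(T+M)^2)$; and $H_{\text{in}}$ pins the initial condition with a constant penalty on the nullspace of the propagation terms. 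Choosing $J_{\text{clock}}, J_1, J_2, J_{\text{in}}$ large enough (each a polynomial in $T+M$) both to absorb the projection-lemma corrections at each inductive step and to push the final gap of $H_2$ above $J$ will close the argument and yield $\epsilon - 1/4 \leq \lambda_0(H) \leq \epsilon$.
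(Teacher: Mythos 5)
Your single-shot application of the projection lemma, with $H_2 = J_{\text{in}} H_{\text{in}} + J_{\text{clock}} H_{\text{clock}} + J_1 H_{\text{prop1}} + J_2 H_{\text{prop2}}$ as the penalty, rests on the claim that ``each constituent term of $H_2$ is positive semi-definite, so $\cS$ is the intersection of their individual nullspaces.'' This is false and it is the crux of the whole difficulty: $H_{\text{prop2}}$ (the $H_{\text{qubit},t}+H_{\text{time},t}$ terms) is \emph{not} positive semi-definite. Worse, $H_{\text{prop2}}\ket{\eta}\neq 0$ --- the qubit terms flip a single clock qubit and throw $\ket{\eta}$ partially out of the legal clock space, so only the expectation $\braket{\eta}{H_{\text{prop2}}|\eta}=0$ holds, not annihilation. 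Consequently $\ket{\eta}$ does not lie in the intersection of the nullspaces, that intersection need not be the zero eigenspace of your $H_2$, and $H_2$ itself may have negative eigenvalues. The hypotheses of Lemma~\ref{lem:projection} ($\cS$ the zero eigenspace of $H_2$, everything orthogonal at eigenvalue $\geq J>0$) therefore cannot be certified for your decomposition, and the identification of $\cS$ with $\mathrm{span}\{\ket{\eta}\}$, on which both your upper bound argument (``all penalty terms vanish since $\ket{\eta}\in\cS$'') and the uniqueness claim rely, collapses. Even the expectation value $\braket{\eta}{H_{\text{prop2}}|\eta}=0$ requires a separate computation (easiest after restricting to $\cS_{\text{prop1}}$, where $H_{\text{prop2}}$ becomes a sum of manifest projector-like PSD terms), not a membership-in-$\cS$ argument.

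Your proposed fix --- bounding the spectral gap of the full $H_2$ by iterating Lemma~\ref{lem:projection} over its constituents --- does not close this gap either: the projection lemma controls the \emph{smallest eigenvalue} of a sum restricted to a subspace, not the gap above a one-dimensional ground space, and the non-PSD $H_{\text{prop2}}$ still blocks any clean statement about the nullspace or gap of $H_2$ as a whole. The proof in the paper avoids this by never forming such an $H_2$: it applies the projection lemma four times in sequence, at each stage keeping $H_{\text{out}}$ together with all not-yet-processed penalty terms inside $H_1$ and peeling off a single penalty ($H_{\text{clock}}$, then $H_{\text{prop1}}$, then a PSD surrogate $H'$ satisfying $H_{\text{prop2}}|_{\cS_{\text{prop1}}}-H'\succeq 0$ with nullspace $\cS_{\text{prop}}$ and gap $\Omega(1/T^2)$, then $H_{\text{in}}$), losing at most $\frac18$ per step and recovering $\frac14$ at the end because the last two stages involve only PSD operators. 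Uniqueness of $\lambda_0(H)$ then comes from the one-dimensionality of the nullspace of $H_{\text{in}}|_{\cS_{\text{prop}}}$, not from a gap bound on your $H_2$. (Two smaller points: within $\cS_{\text{prop}}$ the nonzero eigenvalues of $H_{\text{in}}$ are only $\Omega(1/(T+M))$, not constant as you assert; and if your one-shot argument were valid, the PSD-ness of $(T+M)H_{\text{out}}$ would force the exact equality $\lambda_0(H)=\epsilon$ via the last clause of Lemma~\ref{lem:projection}, which should have flagged that the hypotheses cannot hold as stated.)
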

\noindent A full proof of Lemma~\ref{lem:eigenvalues} can be found in Appendix~\ref{app:proof_1}. We now turn our attention to lower bounding the energy of states orthogonal to $\ket{\eta}$. We obtain the following result, which in our construction plays a role analogous to that of Lemma 2 of Ref.~\cite{gharibian2021dequantizing}.
\begin{lemma} Any state orthogonal to $\ket{\eta}$ has energy at least $\Delta$, provided that  \begin{eqnarray*}
    J_{\text{clock}} &\geq& \poly(\|H_{\text{out}}+J_{\text{in}}H_{\text{in}}+J_2H_{\text{prop2}}+J_1H_{\text{prop1}}\|) \Delta  \\
    J_1 &\geq& \poly(\|H_{\text{out}}+J_{\text{in}}H_{\text{in}}+J_2H_{\text{prop2}}\|M)\Delta \\
    J_2 &\geq& \poly(\|H_{\text{out}}+J_{\text{in}}H_{\text{in}}\|T)\Delta \\
    J_{\text{in}} &\geq& \poly(\|H_{\text{out}}\|(T+M))\Delta \,.
\end{eqnarray*}
For this choice of parameters, $\norm{H} = \mO(\poly(n)\Delta)$.
\label{lem:orthogonal_states}
\end{lemma}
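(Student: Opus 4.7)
The plan is to prove this by iterated application of the Projection Lemma (Lemma~\ref{lem:projection}), peeling off the weighted penalty terms of $H$ in decreasing order of their coefficients. Each application eliminates one ``frustration-free'' term by restricting attention to its null space, with the coefficient $J$ chosen so that any component outside the null space incurs an energy cost exceeding $\Delta$ once the projection-lemma error term is subtracted. Because the history state $\ket{\eta}$ from Lemma~\ref{lem:eigenvalues} simultaneously lives in all of these null spaces, it will survive the entire peeling process while every state orthogonal to it is forced out of at least one null space and thereby pays energy $\Delta$.

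Concretely, I would work with the nested chain of subspaces
\[
    \mathcal{H} \supset \mathcal{S}_{\text{clock}} \supset \mathcal{S}_1 \supset \mathcal{S}_2 \supset \mathcal{S}_{\text{in}} = \mathrm{span}(\ket{\eta}),
\]
where $\mathcal{S}_{\text{clock}}$ is the null space of $H_{\text{clock}}$ (valid unary clock states), $\mathcal{S}_1$ is the further restriction to the null space of $H_{\text{prop1}}|_{\mathcal{S}_{\text{clock}}}$, $\mathcal{S}_2$ imposes additionally the null space of $H_{\text{prop2}}|_{\mathcal{S}_1}$, and $\mathcal{S}_{\text{in}}$ restricts to the null space of $H_{\text{in}}|_{\mathcal{S}_2}$. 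The identification $\mathcal{S}_{\text{in}} = \mathrm{span}(\ket{\eta})$ follows from the standard observation that states in $\mathcal{S}_2$ are history states indexed by their initial register content, and $H_{\text{in}}$ selects the unique legal initial state $\ket{x,0}_{AB}$.

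I would then invoke Lemma~\ref{lem:projection} four times, once per nesting level. Start with $H_2 = J_{\text{clock}} H_{\text{clock}}$ and $H_1$ equal to the remaining terms: since the smallest nonzero eigenvalue of $H_{\text{clock}}$ is $\Omega(1)$, the condition $J_{\text{clock}} = \poly(\|H_1\|)\,\Delta$ suffices to make the projection error $\|H_1\|^2/(J_{\text{clock}} - 2\|H_1\|)$ an arbitrarily small fraction of $\Delta$. I would then descend to $\mathcal{S}_{\text{clock}}$ and repeat with $J_1 H_{\text{prop1}}$, next with $J_2 H_{\text{prop2}}$, and finally with $J_{\text{in}} H_{\text{in}}$, each time applying the lemma to the already-restricted Hamiltonian and treating everything of smaller coefficient as the ``perturbation'' $H_1$. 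The extra factors $M$, $T$, and $(T+M)$ in the respective bounds on $J_1$, $J_2$, and $J_{\text{in}}$ enter precisely because the restricted propagation operators have smallest nonzero eigenvalues only $1/\poly(M,T)$, so correspondingly larger $J$'s are needed to dominate that scale. At the innermost level $\mathcal{S}_{\text{in}}$ is one-dimensional, so any state orthogonal to $\ket{\eta}$ must have had nonzero weight in some $\mathcal{S}_i^\perp$; the chosen $J$'s then force its total energy to be at least $\Delta$. The norm bound $\|H\| = \mO(\poly(n)\Delta)$ follows immediately, since each $J_i$ and each $\|H_{\cdot}\|$ is at most polynomially large by construction.

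The main obstacle will be verifying the spectral-gap estimates needed to apply the Projection Lemma at the deeper levels, in particular for $H_{\text{prop2}}|_{\mathcal{S}_1}$, since the $H_{\text{qubit}}$ and $H_{\text{time}}$ machinery from~\cite{kempe2006} does not reduce to a standard quantum-walk-on-a-line argument. I would adapt the geometric-lemma analysis of~\cite{kempe2006} to the pre-idled setting, checking carefully that inserting $M$ identity gates at the start does not degrade the gap beyond the anticipated $1/\poly(M)$ scaling already reflected in the $J$ bounds. A secondary subtlety is that Lemma~\ref{lem:projection} is stated for the ground energy only, whereas here I need a lower bound valid on all of $\ket{\eta}^\perp$; this is handled by observing that the quadratic-form computation in the proof of Lemma~\ref{lem:projection} applies verbatim to any vector, so restricting the whole argument to $\ket{\eta}^\perp$ at each level yields the desired eigenvalue bound.
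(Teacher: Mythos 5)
Your proposal is correct and follows essentially the same route as the paper: the paper also lower-bounds the energy on $\ket{\eta}^\perp$ by working through the same nested chain $\mathcal{H}\supset\cS_{\text{legal}}\supset\cS_{\text{prop1}}\supset\cS_{\text{prop}}\supset\cS_{\text{in}}=\mathrm{span}(\ket{\eta})$ established in the proof of Lemma~\ref{lem:eigenvalues}, applying the projection-lemma penalties with exactly the stated coefficient scalings (including the $M$, $T$, $T+M$ factors coming from the $1/\poly$ gaps of the restricted propagation terms). The only subtlety the paper states explicitly that you leave implicit is that $H_{\text{prop2}}$ is not positive semi-definite outside $\cS_{\text{prop1}}$, so $J_{\text{clock}}$ and $J_1$ must dominate the possible negative contribution of magnitude $J_2\norm{H_{\text{prop2}}}$ --- but your iterated use of Lemma~\ref{lem:projection} absorbs this automatically, since its error term and the condition $J>2\norm{H_1}$ are phrased in terms of $\norm{H_1}$, which includes the $J_2 H_{\text{prop2}}$ term.
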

\begin{figure}
    \centering
    \includegraphics{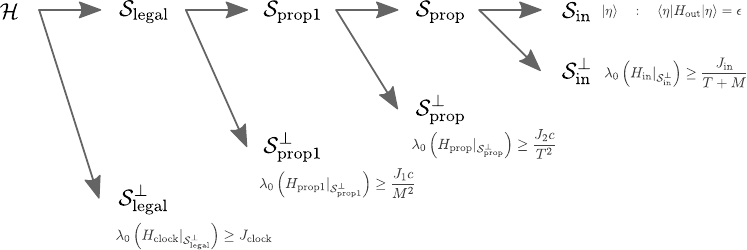}
    \caption{Summary of the proof used to lower bound $\lambda_0(H)$. At each step, we `peel off' a part of the Hilbert space until we are left with a single state spanning the groundspace of $H$. The arrows show the direction of inclusions: e.g. $\mathcal{H} \supset \cS_{\text{legal}}$ and $\mathcal{H} \supset \cS_{\text{legal}}^\perp$. Underneath the bottom row of subspaces is written the smallest eigenvalue of a particular Hamiltonian within that subspace (i.e. the minimum energy penalty given by that Hamiltonian to states contained in that subspace).}
    \label{fig:proof_summary}
\end{figure}
\begin{proof}
In Figure~\ref{fig:proof_summary} we show a summary of the steps performed in the proof of Lemma~\ref{lem:eigenvalues}: at each step, we peel off a part of the Hilbert space $\mathcal{H}$ in order to apply Lemma~\ref{lem:projection}, which lower bounds the energy given to states in a particular subspace by a particular term in the Hamiltonian. Note in particular that
\[
    \cS_{\text{legal}}^\perp \cup \cS_{\text{prop1}}^\perp \cup
    \cS_{\text{prop}}^\perp \cup \cS_{\text{in}}^\perp \cup
    \cS_{\text{in}} = \mathcal{H}\, \qquad \cS_{\text{legal}}^\perp \cap \cS_{\text{prop1}}^\perp \cap
    \cS_{\text{prop}}^\perp \cap \cS_{\text{in}}^\perp \cap
    \cS_{\text{in}} = 0\,,
\]
The smallest eigenvalue of $H$ lies in the 1-dimensional space $\cS_{\text{in}}$ (which is spanned by $\ket{\eta}$), implying that the second smallest eigenvalue can only be supported in $\cS_{\text{in}}^\perp \cup \cS_{\text{prop}}^\perp \cup \cS_{\text{prop1}}^\perp \cup \cS_{\text{legal}}^\perp$. If all terms in the Hamiltonian were positive semi-definite, then by working backwards through the subspace splitting we could conclude that the second smallest eigenvalue must have a value larger than or equal to the smallest penalty given by any of the terms $H_{\text{clock}} | _{\mathcal{S}_{\text{legal}}^\perp}$, $H_{\text{prop1}} | _{\mathcal{S}_{\text{prop1}}^\perp}$, $H_{\text{prop}} | _{\mathcal{S}_{\text{prop}}^\perp}$, or $H_{\text{in}} | _{\mathcal{S}_{\text{in}}^\perp}$. 

Unfortunately, as we note in the proof of Lemma~\ref{lem:eigenvalues}, $H_{\text{prop2}}$ is not positive semi-definite; however, within the space $\cS_{\text{prop1}}$ it is. Therefore, any state lying in $\cS_{\text{in}}^\perp \subset \cS_{\text{prop}} \subset \cS_{\text{prop1}}$ will have energy at least $\frac{J_{\text{in}}}{T+M}$. Similarly, any state in $\cS_{\text{prop}}^\perp \subset \cS_{\text{prop1}}$ will have energy at least $\frac{J_2c}{T^2}$. 

Now we turn our attention to states outside of $\cS_{\text{prop1}}$, whose energy might pick up a negative contribution from $H_{\text{prop2}}$, of magnitude at most $J_2\|H_{\text{prop2}}\|$. If the state lies in $\cS_{\text{prop1}}^\perp$, then so long as we choose $J_1 \geq J_2\|H_{\text{prop2}}\|M^2 \Delta/c$ for some $\Delta$ (to be chosen later), the state will have overall energy at least $\Delta$. Likewise, if the state lies in $\cS_{\text{legal}}^\perp$, then so long as we choose $J_{\text{clock}} \geq J_2\|H_{\text{prop2}}\|\Delta$ it will have energy at least $\Delta$. 

In line with the proof of~\ref{lem:eigenvalues}, we can choose the Hamiltonian coefficients such that
\begin{eqnarray*}
    J_{\text{clock}} &\geq& \poly(\|H_{\text{out}}+J_{\text{in}}H_{\text{in}}+J_2H_{\text{prop2}}+J_1H_{\text{prop1}}\|) \Delta  \\
    J_1 &\geq& \poly(\|H_{\text{out}}+J_{\text{in}}H_{\text{in}}+J_2H_{\text{prop2}}\|M)\Delta \\
    J_2 &\geq& \poly(\|H_{\text{out}}+J_{\text{in}}H_{\text{in}}\|T)\Delta \\
    J_{\text{in}} &\geq& \poly(\|H_{\text{out}}\|(T+M))\Delta \,.
\end{eqnarray*}
With this combination of choices, we can guarantee that any state orthogonal to $\ket{\eta}$ has energy at least $\Delta$.
\end{proof}

At this point we have a 2-local Hamiltonian $H$ and a low-energy state $\ket{\eta}$ encoding some $\BQP$ computation, as well as a lower bound on the energy of states orthogonal to $\ket{\eta}$ (which we later use to prove a bound on the spectral gap of $H$). We now show that the same argument that~\cite{gharibian2021dequantizing} make about the fidelity of the history state $\ket{\eta}$ with the true groundstate $\ket{\phi}$ in the {\sc yes}-case directly translates also to the {\sc no}-case, by noting that
\begin{align}
    \bra{\phi }H \ket{\phi} \leq \bra{\eta }H \ket{\eta} = \epsilon \leq 1\,,
    \label{eq:uboundE0}
\end{align}
and combining this with Lemma~\ref{lem:orthogonal_states}.

\begin{lemma}
\label{lem:fidelity_eta_gs}
The fidelity of the history state $\ket{\eta}$ of Eq.~\ref{eq:history_state} with the true (unknown) ground state $\ket{\phi}$ of $H$ satisfies
\begin{equation}\label{eqn:fidelity_h_g}
    |\bra{\phi}\ket{\eta}|^2 \geq 1 - \frac1\Delta
\end{equation}
\end{lemma}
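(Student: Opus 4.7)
The plan is to expand $\ket{\eta}$ in the eigenbasis of $H$ and to use Lemma~\ref{lem:orthogonal_states} to argue that all but the groundstate component of $\ket{\eta}$ must live in the high-energy part of the spectrum, forcing $|\braket{\phi}{\eta}|^2$ to be close to $1$.

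First I would upgrade the statement of Lemma~\ref{lem:orthogonal_states} into a direct lower bound on the first excited energy $\lambda_1(H)$. That lemma says the smallest eigenvalue of $H$ restricted to $\mathrm{span}(\ket{\eta})^\perp$ is at least $\Delta$, and by the Courant--Fischer min-max characterization of eigenvalues (equivalently, by Cauchy interlacing after projecting out a one-dimensional subspace) this yields
\begin{equation*}
\lambda_1(H) \;\geq\; \min_{\ket{\psi}\perp\ket{\eta},\,\|\ket{\psi}\|=1}\bra{\psi}H\ket{\psi} \;\geq\; \Delta.
\end{equation*}
Since Lemma~\ref{lem:eigenvalues} also guarantees that the groundstate is unique, I may identify $\ket{\phi}=\ket{\psi_0}$ and expand $\ket{\eta}=\sum_i c_i\ket{\psi_i}$ in the eigenbasis of $H$, so that $|\braket{\phi}{\eta}|^2=|c_0|^2$.

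Next I would combine the identity $\bra{\eta}H\ket{\eta}=\epsilon$ from Lemma~\ref{lem:eigenvalues} with the spectral bounds just derived:
\begin{equation*}
\epsilon \;=\; |c_0|^2\lambda_0 + \sum_{i\geq 1}|c_i|^2\lambda_i \;\geq\; |c_0|^2\lambda_0 + \Delta\bigl(1-|c_0|^2\bigr).
\end{equation*}
Together with $\lambda_0\leq\epsilon\leq 1$ (which holds by the variational principle since $\ket{\phi}$ is the groundstate and $\ket{\eta}$ is a feasible test vector of energy $\epsilon$), rearrangement gives $1-|c_0|^2\leq (\epsilon-|c_0|^2\lambda_0)/\Delta \leq 1/\Delta$, which is exactly the claimed bound.

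The one delicate point is that Lemma~\ref{lem:eigenvalues} only guarantees $\lambda_0\geq\epsilon-1/4$, so in principle $\lambda_0$ can be slightly negative and contribute an additive $\mathcal{O}(1)$ term to the numerator $\epsilon-|c_0|^2\lambda_0$; this is harmless since it can be absorbed by choosing $\Delta$ a constant factor larger in Lemma~\ref{lem:orthogonal_states} (which only fixes the polynomial scaling of the Hamiltonian coefficients). The step I expect will require the most care is the first one --- upgrading the energy bound on the subspace $\ket{\eta}^\perp$ into a bound on $\lambda_1(H)$, since the former is a priori a weaker statement about a single codimension-one subspace --- after which the rest is an elementary rearrangement of the variational principle.
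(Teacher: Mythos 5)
Your proof is correct, but it takes a genuinely different route from the paper. The paper decomposes the unknown groundstate as $\ket{\phi}=\alpha_1\ket{\eta}+\alpha_2\ket{\eta^\perp}$, discards the positive semi-definite term $(T+M)H_{\text{out}}$, uses that $\ket{\eta}$ is annihilated by the remaining part $H_1$ of Eq.~\eqref{eq:H1_H2_2local} so that only the contribution $|\alpha_2|^2\bra{\eta^\perp}H_1\ket{\eta^\perp}\geq|\alpha_2|^2\Delta$ survives, and combines this with the variational bound $\bra{\phi}H\ket{\phi}\leq\bra{\eta}H\ket{\eta}=\epsilon\leq 1$; this yields exactly $1-\frac1\Delta$ without ever touching $\lambda_0$ or $\lambda_1$. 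You instead expand $\ket{\eta}$ in the eigenbasis of $H$ after first upgrading Lemma~\ref{lem:orthogonal_states} to the spectral-gap statement $\lambda_1(H)\geq\Delta$ via Courant--Fischer; in effect you prove (a cleaner version of) Corollary~\ref{lem:specgapH} first and deduce the fidelity bound from it, reversing the paper's logical order, in which this lemma comes first and the gap bound is derived afterwards by essentially the same $H_1$/$H_2$ argument. Your route buys two things: the min--max step is a tidier way to obtain the gap than the paper's Corollary~\ref{lem:specgapH}, and working in the eigenbasis of $H$ means you never need $\ket{\eta}$ to be an exact null vector of any piece of the Hamiltonian --- a fact the paper's displayed computation invokes as ``$H_1\ket{\eta}=0$'' even though the appendix notes that $H_{\text{prop2}}\ket{\eta}\neq 0$. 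The price is the constant you correctly flag: since Lemma~\ref{lem:eigenvalues} only gives $\lambda_0\geq\epsilon-\frac14$, your rearrangement as written gives $1-|c_0|^2\leq \mO(1)/\Delta$ rather than literally $1/\Delta$; this is indeed harmless because $\Delta$ is a free polynomial parameter, and in fact substituting $\lambda_0\geq\epsilon-\frac14$ directly gives $1-|c_0|^2\leq\frac{1}{4(\Delta-\epsilon)}\leq\frac{1}{4(\Delta-1)}$, which is at least as strong as the stated $\frac1\Delta$ once $\Delta\geq 4/3$, so the lemma as stated is recovered.
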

\begin{proof}
Since $\braket{\eta}{H|\eta} = \epsilon \leq 1$ in both the {\sc yes} and {\sc no} cases, we can prove the result for both simultaneously. Let
\begin{align}
\begin{split}
     &H_1 := J_{\text{in}} H_{\text{in}} + J_{\text{clock}} H_{\text{clock}} + J_1 H_{\text{prop1}} + J_2 H_{\text{prop2}},\\
     &H_2 := (T+M) H_\text{out} 
     \end{split}
     \label{eq:H1_H2_2local}
\end{align}
(i.e. everything except the term $H_{\text{out}}$). Write $\ket{\phi} = \alpha_1 \ket{\eta} + \alpha_2 \ket{\eta^\bot}$, for $\ket{\eta}$ the history state from Eq.~\ref{eq:history_state}, and $|\alpha_1|^2 + |\alpha_2|^2 = 1$. Then,
\begin{align*}
    1 &\geq \bra{\phi}H\ket{\phi} &&\\
    &\geq \bra{\phi} H_1\ket{\phi} \quad &&(H_2 \succeq 0)\\
    &= ( |\alpha_1|^2 \bra{\eta} H_1\ket{\eta} + |\alpha_2|^2 \bra{\eta^\perp} H_1\ket{\eta^\perp}) &&\\
    &= |\alpha_2|^2 \bra{\eta^\perp} H_1\ket{\eta^\perp} \quad && (H_1 \ket{\eta} = 0)\\
    &\geq |\alpha_2|^2 \Delta\, \quad  &&\text{(Lemma~\ref{lem:orthogonal_states})}
\end{align*}
where we use the fact that $H_1\ket{\eta}=0$. From this we conclude 
\begin{equation}\label{eqn:fidelity_h_phi_nocase}
    |\bra{\phi}\ket{\eta}|^2\geq 1 - \frac{1}{\Delta}.
\end{equation}
\end{proof}
We note here that we can prove a lower bound on the spectral gap of $H$ in a similar manner, by using Lemma~\ref{lem:orthogonal_states}.
\begin{corollary} The spectral gap $\gamma(H)$ is lower bounded by
\begin{align}
    \gamma (H) := \lambda_1(H)-\lambda_0(H) \geq \Delta-2
    \end{align}
\label{lem:specgapH}
\end{corollary}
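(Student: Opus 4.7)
The plan is to combine Lemmas~\ref{lem:eigenvalues} and~\ref{lem:orthogonal_states} with Cauchy's interlacing theorem, which makes the spectral gap bound a short linear-algebraic consequence of spectral information already proved. First, Lemma~\ref{lem:eigenvalues} gives $\bra{\eta}H\ket{\eta} = \epsilon$, and since $\epsilon$ is a circuit error probability we have $\epsilon \leq 1$; the Rayleigh quotient then yields $\lambda_0(H) \leq \epsilon \leq 1$ in both the {\sc yes} and {\sc no} cases. Second, Lemma~\ref{lem:orthogonal_states} states that every unit vector orthogonal to $\ket{\eta}$ has $H$-energy at least $\Delta$; if we let $\mathcal{S}^\perp$ denote the orthogonal complement of $\mathrm{span}(\ket{\eta})$ in $\mathcal{H}$, this is precisely the statement $\lambda_0(H|_{\mathcal{S}^\perp}) \geq \Delta$.

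I would then pick an orthonormal basis for $\mathcal{H}$ that contains $\ket{\eta}$. In this basis, $H|_{\mathcal{S}^\perp}$ is exactly the principal submatrix of $H$ obtained by deleting the row and column indexed by $\ket{\eta}$. Applying Cauchy's interlacing theorem to this $(\dim\mathcal{H}-1)$-dimensional principal submatrix immediately yields $\lambda_1(H) \geq \lambda_0(H|_{\mathcal{S}^\perp}) \geq \Delta$.

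Putting the two bounds together gives $\gamma(H) = \lambda_1(H) - \lambda_0(H) \geq \Delta - \epsilon \geq \Delta - 1$, which implies the stated bound $\gamma(H) \geq \Delta - 2$ with room to spare. I do not anticipate any technical obstacle, since all of the nontrivial spectral work has already been carried out in Lemmas~\ref{lem:eigenvalues} and~\ref{lem:orthogonal_states}. The only point requiring a moment of care is recognizing that the restriction $H|_{\mathcal{S}^\perp}$ coincides with a principal submatrix under a suitable basis choice, which is what licenses the interlacing step; this is immediate.
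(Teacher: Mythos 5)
Your proof is correct, but it takes a different route from the paper. The paper never invokes interlacing: it writes $\ket{\eta}$ in the eigenbasis of $H$, expands the first excited state as $\ket{\phi_1} = b\ket{\eta} + \beta\ket{\eta^\perp}$, lower-bounds $\abs{\beta}^2$ via the fidelity bound $\abs{\braket{\phi}{\eta}}^2 \geq 1 - 1/\Delta$ of Lemma~\ref{lem:fidelity_eta_gs}, and then bounds $\lambda_1(H) \geq \bra{\phi_1}H_1\ket{\phi_1} = \abs{\beta}^2\bra{\eta^\perp}H_1\ket{\eta^\perp} \geq (1-1/\Delta)\Delta = \Delta - 1$, using $H_{\text{out}} \succeq 0$, $H_1\ket{\eta}=0$, and Lemma~\ref{lem:orthogonal_states}. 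You instead take the statement of Lemma~\ref{lem:orthogonal_states} at face value as a quadratic-form bound on the entire orthogonal complement of $\ket{\eta}$ and apply Cauchy interlacing (equivalently Courant--Fischer) to get $\lambda_1(H) \geq \Delta$ directly, which is shorter, bypasses Lemma~\ref{lem:fidelity_eta_gs} entirely, and yields the marginally stronger bound $\gamma(H) \geq \Delta - \epsilon \geq \Delta - 1$. One small point of care: you read the lemma as a bound on the $H$-energy, whereas the paper applies it to $H_1$ (the Hamiltonian without the $(T+M)H_{\text{out}}$ term); since the penalty terms producing the $\Delta$ bound all sit in $H_1$ and $(T+M)H_{\text{out}} \succeq 0$, the $H_1$ version implies the $H$ version, so your usage is safe, but it is worth stating that one-line implication explicitly.
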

\begin{proof}
The proof goes in similar fashion to the one used to prove Lemma~\ref{lem:fidelity_eta_gs}. Let $\{\phi,\phi_1,\dots,\phi_{2^n-1} \}$ be the set of eigenstates of $H$, ordered from lowest to highest energy.  We can write the history state in the eigenbasis of $H$ as 
\[
\ket{\eta} = a \ket{\phi} + b \ket{\phi_1} + c \ket{\phi_\text{rest}},
\]
where $\ket{\phi_\text{rest}} \in \text{span}\{\phi_2,\dots,\phi_{2^n-1}\}$ and $\abs{a}^2 + \abs{b}^2 + \abs{c}^2 = 1$. Since $\abs{c}^2 \geq 0$, and we must have that $\abs{\bra{\eta}\ket{\phi_1}}^2= \abs{b}^2 \leq 1-\abs{a}^2  = 1- \abs{\bra{\eta}\ket{\phi}}^2$, for which a lower bound on $\abs{\bra{\eta}\ket{\phi}}^2$ is known from Lemma~\ref{lem:fidelity_eta_gs}.  We rewrite $\ket{\phi_1}$ as
\[
\ket{\phi_1} = b \ket{\eta} + \beta \ket{\eta^\perp},
\]
where $\abs{\beta}^2 = 1-\abs{b}^2 \geq \abs{\bra{\eta}\ket{\phi}}^2$. Again, let $H_1$ and $H_2$ be as in Eq.~\eqref{eq:H1_H2_2local}. A lower bound on the first excited state energy $\lambda_1$ is then 
\begin{align*}
\lambda_1(H)&=\bra{\phi_1} H \ket{\phi_1} \\
&\geq \bra{\phi_1} H_1 \ket{\phi_1} \quad &&\text{(Since $H_2 \succeq 0$)}\\
&= \abs{b}^2 \bra{\eta} H_1 \ket{\eta} + \abs{\beta}^2 \bra{\eta^\perp} H_1 \ket{\eta^\perp}  &&\text{($\ket{\eta},\ket{\eta^\perp}$ are eigenstates of $H_1$)}\\
&=\abs{\beta}^2 \bra{\eta^\perp} H_1 \ket{\eta^\perp} &&\text{($H_1 \ket{\eta}=0)$}\\
&\geq \abs{\bra{\eta}\ket{\phi}}^2 \Delta \quad &&\text{(Lemma~\ref{lem:orthogonal_states})}\\
&\geq \Delta-1  \quad &&\text{(Lemma~\ref{lem:fidelity_eta_gs}).}
\end{align*}
If we combine the above with the already established upper bounds on $\lambda_0(H)$ of Eq.~\eqref{eq:uboundE0}, we arrive at the desired result.
\qedhere
\end{proof}
Returning to the result of Lemma~\ref{lem:fidelity_eta_gs}, note that the fidelity $|\braket{\phi}{\eta}|^2 \geq 1-\frac1\Delta$ can be increased to $1-\frac{1}{\poly(n)}$ by setting $\Delta = poly(n)$, which still keeps $\|H\|=\mO(\poly(n))$. Finally, the fidelity of $\ket{\eta}$ with the guiding state $\ket{u}$ is
\begin{align}
    \abs{\braket{\eta}{u}}^2 &= \abs{\sum_{t=1}^M \frac{1}{\sqrt{M(M+T)}} }^2 \nonumber\\ 
    &= \frac{M^2}{M^2+MT} \nonumber \\
    &\geq 1-\frac{T}{M}, \label{eq:fidelity_eta_u}
\end{align}
by taking the series expansion at $1/M \rightarrow 0 $ and assuming $M \gg T$. And so we have 
\begin{align}
    ||\proj{u} - \proj{\phi}|| &\leq ||\proj{u} - \proj{\eta}|| + ||\proj{\eta} - \proj{\phi}|| \nonumber\\
    2\sqrt{1 - |\bra{u}\ket{\phi}|^2} &\leq 2\sqrt{1 - |\bra{u}\ket{\eta}|^2} + 2\sqrt{1 - |\bra{\eta}\ket{\phi}|^2} \nonumber\\
    |\bra{u}\ket{\phi}|^2  &\geq  1 - \left(\sqrt{1 - |\bra{u}\ket{\eta}|^2} + \sqrt{1 - |\bra{\eta}\ket{\phi}|^2}\right)^2 \nonumber\\
    &\geq 1 - \left(\sqrt{\frac{T}{M}} + \sqrt{\frac{1}{\Delta}}\right)^2\,. \label{eq:fidelity_u_gs}
\end{align}
By setting $\Delta = \poly(n)$ and $M=\poly(T)$ (where $T=\mO(\poly(n))$ we have that the guiding vector always has an fidelity of at least $1 - 1/(poly(n))$ with the true groundstate, and circumvents the need to split the Hilbert space and therefore for the locality to be increased, as was the case in the original construction of~\cite{gharibian2021dequantizing}, and hence we obtain $\BQP$-hardness for a 2-local Hamiltonian. This is sufficient to prove Theorem~\ref{thm:main_result1} for the case $c=0$. In the next section we prove the results for the case $0<c<\mO(\poly(n))$.

\section{Generalization to excited state energies}\label{sec:excited}
In Ref.~\cite{PhysRevA.81.032331}, the authors show that determining the $c$th excited state energy of a $k$-local Hamiltonian ($k\geq 3$), where $c = \poly(n)$, is $\QMA$-complete -- even if all the $c-1$ energy eigenstates and corresponding energies are known. We will now show that their construction easily translates to the setting with guiding states. This follows rather straightforwardly from combining the results of the two previous sections with established results in Hamiltonian complexity theory, in particular the aforementioned $3$-local $\QMA$-complete excited state Hamiltonian from Ref.~\cite{PhysRevA.81.032331} and perturbative gadget techniques from Ref.~\cite{kempe2006}. As a bonus, this also shows that the unguided problem is $\QMA$-hard for $k=2$, which was left open in~\cite{PhysRevA.81.032331}.

\subsection{A 3-local gadget Hamiltonian for low-energy states}
We first prove $\BQP$-hardness of $\mathsf{GLHLE}(k,c,\epsilon,\zeta)$ with $k\geq 3$, which follows simply from combining Eq.~\eqref{eq:fidelity_u_gs} with a construction similar to the one used in Ref.~\cite{PhysRevA.81.032331}.
\begin{lemma} 
$\mathsf{GLHLE}(k,c,\epsilon,\zeta)$ is $\BQP$-hard for $k\geq 3$, $c \leq \mO(\poly(n))$ and $\zeta = 1-\Omega(1/\poly(n))$. 
\end{lemma}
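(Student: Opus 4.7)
The plan is to lift the $\BQP$-hardness result from Section~\ref{sec:2local} (where the target is the ground state, i.e.\ $c=0$) to arbitrary target eigenstate index $c\leq \mathcal{O}(\poly(n))$ by ``padding'' the spectrum of the 2-local Hamiltonian $H$ from Section~\ref{sec:2local} with exactly $c$ additional artificial low-energy eigenstates placed just below $\lambda_0(H)$. This is in the same spirit as the excited-state construction of~\cite{PhysRevA.81.032331}: if we can guarantee that the BQP-encoding eigenstate of $H$ becomes the $c$-th eigenstate of a new Hamiltonian $H'$, and that the guiding state of Section~\ref{sec:2local} (suitably padded) still has fidelity $1-1/\poly(n)$ with this $c$-th eigenstate, we are done.

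Concretely, I would introduce a single ``flag'' qubit $F$ and an ancilla register $E$ of $\lceil \log c\rceil$ qubits, and define
\begin{equation*}
H' \;:=\; H_{AB}\otimes \proj{1}_F \;+\; H_{\mathrm{aux}}^{E}\otimes \proj{0}_F \;+\; J\Bigl(\sum_{i\in AB}\proj{1}_{i}\Bigr)\otimes\proj{0}_F \;+\; J\Bigl(\sum_{j\in E}\proj{1}_{j}\Bigr)\otimes\proj{1}_F,
\end{equation*}
where $H_{\mathrm{aux}}^E = \sum_{j=1}^{\log c} \mu_j \proj{1}_{E_j}$ is a diagonal single-qubit Hamiltonian on $E$ and the $\mu_j$ are chosen so that all $2^{\log c}=c$ eigenvalues of $H_{\mathrm{aux}}$ lie strictly between two prescribed values below $a$ (the YES threshold for the original problem), separated pairwise by more than $\delta$. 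The two penalty terms, each a sum of 2-local Hamming-weight projectors gated by the flag, force $AB$ to be $\ket{0\dots 0}$ whenever $F=0$ and $E$ to be $\ket{0\dots 0}$ whenever $F=1$. Every summand acts on at most 3 qubits (a 2-local term in $H$ tensored with the flag), so $H'$ is 3-local.

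The spectrum of $H'$ then splits according to $F$: the $F=0$ branch contributes exactly $c$ low-lying eigenstates of the form $\ket{0\dots 0}_{AB}\otimes\ket{0}_F\otimes\ket{e}_E$ with energies $H_{\mathrm{aux}}(e)$, all below $a$; the $F=1$ branch contributes the eigenstates of $H$ (tensored with $\ket{1}_F\otimes\ket{0\dots 0}_E$), the smallest of which is the history-like state $\ket{\phi}$ with energy $\lambda_0(H)$. Choosing the penalty $J$ large enough (polynomial in $n$, so that $\|H'\|$ stays polynomial and can be renormalised to $\|H'\|\leq 1$) ensures no other eigenstate intrudes between these groups. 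Hence $\lambda_c(H')=\lambda_0(H)$ and the corresponding eigenstate is $\ket{\phi}_{AB}\otimes\ket{1}_F\otimes\ket{0\dots 0}_E$. The YES/NO promise for $\lambda_c(H')$ then inherits directly from the YES/NO promise on $\lambda_0(H)$ from Section~\ref{sec:2local}, preserving the gap $\delta=1/\poly(n)$.

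For the guiding state I would take $\ket{u'}:=\ket{u}_{AB}\otimes\ket{1}_F\otimes\ket{0\dots 0}_E$, where $\ket{u}$ is the semi-classical guiding state of Section~\ref{sec:2local}. The tensor product of semi-classical states is semi-classical, and by Eq.~\eqref{eq:fidelity_u_gs} we retain $|\braket{u'}{\lambda_c(H')}|^2 = |\braket{u}{\phi}|^2 \geq 1-1/\poly(n)$. The main technical obstacle, in my view, is the \emph{bookkeeping} of scales: the dummy energies $\{H_{\mathrm{aux}}(e)\}_{e\in\{0,1\}^{\log c}}$ must all fit in a window strictly below the YES threshold $a$, pairwise-separated by at least $\delta$, while $J$ must dominate $\|H\|$ and $\|H_{\mathrm{aux}}\|$ so that the two branches genuinely decouple; since $c,\delta^{-1},\|H\|$ are all $\poly(n)$, this is achievable but requires a careful choice of constants. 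Once this is verified, 3-local $\BQP$-hardness for $c\leq \mathcal{O}(\poly(n))$ and $\zeta=1-\Omega(1/\poly(n))$ follows immediately.
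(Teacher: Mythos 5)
Your proposal follows essentially the same route as the paper: a single flag qubit splits the Hamiltonian into one block carrying the 2-local $\BQP$-hard Hamiltonian $H$ of Section~\ref{sec:2local} and one block carrying a classical diagonal Hamiltonian contributing exactly $c$ artificial levels below it, with the guiding state taken to be the Section~\ref{sec:2local} guiding state tensored with $\ket{1}$ on the flag. The paper realizes the dummy block with a binary-weighted $1$-local Hamiltonian $H^{(z)}$ acting on the \emph{same} $n$ qubits (see Eq.~\eqref{eq:Hc}), so no extra register $E$ and no penalty terms $J$ are needed; your variant with $\lceil\log c\rceil$ ancillas and Hamming-weight penalties can be made to work, but only adds bookkeeping.

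Two steps, as written, would fail. First, placing the dummy energies merely ``below $a$'' does not guarantee that the encoded state is the $c$th eigenstate: in a {\sc yes} instance one only knows $\lambda_0(H)\leq a$, and $\lambda_0(H)$ may lie \emph{below} some of your dummy energies, in which case the target state has index $<c$, $\lambda_c(H')$ is a dummy eigenvalue, and the promise $\norm{\Pi_c \,u'}^2\geq\zeta$ is violated (you cannot place them ``just below $\lambda_0(H)$'' either, since $\lambda_0(H)$ is exactly what is unknown). The fix is to place the dummies strictly below a \emph{certified lower bound on the entire spectrum} of a shifted and rescaled $H$: Lemma~\ref{lem:eigenvalues} gives $\lambda_0(H)\geq\epsilon-\frac14\geq-\frac14$, and the paper exploits this by mapping $H$ to $H^{(s)}$ with spectrum in $[-1/4,1/4]$ and putting the $c$ dummy levels at $-c+\frac12,\dots,-\frac12$, all other dummy levels being $\geq\frac12$. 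Second, your register of $\lceil\log c\rceil$ qubits has $2^{\lceil\log c\rceil}$ levels, which equals $c$ only when $c$ is a power of two; for general $c=\mO(\poly(n))$ you must choose the $\mu_j$ (as the paper chooses the weights $2^i$ together with the shift $-(c-\frac12)I$) so that \emph{exactly} $c$ of the subset sums fall in the low window and all the rest land strictly above the spectrum of the rescaled $H$. (The pairwise $\delta$-separation of the dummy levels you ask for is unnecessary.) With these two points repaired, your reduction coincides with the paper's.
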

\begin{proof}
We will reduce directly from the $\BQP$-complete Hamiltonian $H$ as defined in Eq.~\eqref{eq:H_2local}. Again, let $\ket{u}$ be a semi-classical guiding state such that $\abs{\bra{u} \ket{\psi_0}} \geq \zeta$. Consider the following $3-$local Hamiltonian $H^{(c)}$ on $n+1$ qubits\footnote{Note that this gadget can be trivially changed such that estimating the $n$ highest energy states is $\BQP$-hard.}:
\begin{align}
    H^{(c)} = H^{(z)} \otimes \ket{0} \bra{0} +  H^{(s)} \otimes \ket{1}\bra{1},
    \label{eq:Hc}
\end{align}
where 
\begin{align*}
    &H^{(z)} = \sum_{i=0}^{d} 2^{i} \kb{1}{1}_i + \sum_{i=d+1}^n 2^{d+1}  \kb{1}{1}_i  -\left( c-\frac{1}{2}\right)I,\\
    &H^{(s)} =  \frac{1}{2} \frac{H+I/4}{\norm{H}+1/4} -\frac{1}{4} I,\\
\end{align*}
with $d= \lceil \log2(c) \rceil$. Note that $H^{(z)}$ has exactly $c$ states with negative energy, with the smallest eigenvalue being $-c+\frac{1}{2}$ and the largest eigenvalue valued at $\sum_{i=0}^d 2^i + \sum_{i=d+1}^{n} 2^{d+1} - \left(c-\frac{1}{2} \right) =  2^{d+1}  + 2^{d+1} (n - d) -\frac{1}{2}-c$. The spectrum jumps in integer steps of $1$, and has as largest negative (resp. smallest non-negative) energy value $-\frac{1}{2}$ (resp. $\frac{1}{2}$). Since $\text{eig}(H^{(s)})\in[-1/4,1/4]$, we must have that $H^{(s)}$ sits precisely at the $c$th excited state level (or $c+1$th eigenstate level) in $H^{(c)}$. Therefore, given a guiding state $\ket{u}$ for $H$ such that $\abs{\bra{u}\ket{\psi_0}} \geq \zeta$, one has that the guiding state $\lvert u^{(c)} \rangle  = \ket{u} \otimes  \ket{1}$ is also semi-classical and must have $| \langle u^{(c)} \lvert \psi_c^{(c)} \rangle| \geq \zeta$, where $\lvert \psi_c^{(c)} \rangle$ denotes the $c$th excited state of $H^{(c)}$. Since this construction of $H^{(c)}$ and $\lvert u^{(c)} \rangle $ provides a polynomial time reduction from  an instance of $\mathsf{GLHLE}(k,0,\epsilon,\zeta)$  to one of $\mathsf{GLHLE}(k,c,\epsilon,\zeta)$, whenever $c=\mO(\poly(n))$, we must have that $\mathsf{GLHLE}(k,c,\epsilon,\zeta)$  is $\BQP$-hard whenever $k\geq 3$. \qedhere
\end{proof}
What remains to be done is to bring the locality down from $k=3$ to $k=2$, which will be shown in the next section.

\subsection{Reducing the locality}
We will use the $3$-to-$2$-local perturbative gadget, introduced in the same paper as the 2-local construction we used before~\cite{kempe2006}.  The gadget construction starts with the following lemma:
\begin{lemma}[Lemma 8 from~\cite{kempe2006}] Any $3$-local Hamiltonian $H^{(3)}$ on $n$ qubits can be re-written as 
\begin{align*}
    H^{(3)} = c_r \left( Y-6 \sum_{m=1}^M B_{m1} B_{m2} B_{m3} \right),
\end{align*}
where $Y$ is a 2-local Hamiltonian with $\norm{Y} = \mO(1/n^6)$, $M=\mO(n^3)$, each $B_{mi}$ is a one-qubit term of norm $\mO(1/n^3)$ that satisfies $B_{mi} \geq I/n^3$, and $c_r$ is a rescaling factor that satisfies $1\leq c_r \leq \poly(n)$.
\label{lem:H3}
\end{lemma}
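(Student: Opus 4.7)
The plan is to expand $H^{(3)}$ in the Pauli basis and handle each 3-local Pauli tensor product individually, rewriting it as a product of three shifted single-qubit operators whose cross-terms produce only lower-than-3-local remainders that get absorbed into $Y$. Since there are at most $\binom{n}{3}\cdot 4^3 = \mO(n^3)$ genuinely 3-local Pauli strings on $n$ qubits, the decomposition takes the form $H^{(3)} = H_2 + \sum_{m=1}^M \alpha_m P_{m1}\otimes P_{m2}\otimes P_{m3}$, where $M = \mO(n^3)$, $H_2$ collects all 2-local and lower pieces, each $P_{mj}$ is a single-qubit Pauli, and $|\alpha_m| \leq \|H^{(3)}\|$.

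For each $m$, I would take the ansatz $B_{mi} := \beta I + s_m \gamma_m P_{mi}$ with $\beta,\gamma_m > 0$ and $s_m \in \{\pm 1\}$, and expand
\[
    B_{m1}B_{m2}B_{m3} \;=\; \beta^3 I + \beta^2\gamma_m s_m \sum_i P_{mi} + \beta\gamma_m^2 \sum_{i<j} P_{mi}P_{mj} + \gamma_m^3 s_m\, P_{m1}P_{m2}P_{m3},
\]
using $s_m^2=1$, $s_m^3=s_m$. Only the last term is 3-local. Setting $s_m = -\mathrm{sign}(\alpha_m)$ and $\gamma_m = (|\alpha_m|/(6c_r))^{1/3}$ makes $-6 c_r\,\gamma_m^3 s_m = \alpha_m$, so $-6 c_r\, B_{m1} B_{m2} B_{m3}$ reproduces the target 3-local term up to 0-, 1- and 2-local corrections.

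The constraints on $B_{mi}$ are then satisfied as follows. Since $\|P_{mi}\|=1$ and $P_{mi}$ is a Pauli, $B_{mi}$ has eigenvalues $\beta \pm \gamma_m$, so taking $\beta = 2/n^3$ and ensuring $\gamma_m \leq 1/n^3$ gives both $B_{mi} \succeq I/n^3$ and $\|B_{mi}\| = \mO(1/n^3)$. The condition $\gamma_m \leq 1/n^3$ translates to $c_r \geq |\alpha_m|\, n^9/6$ for every $m$, which is achieved by $c_r = \Theta(\|H^{(3)}\|\, n^9)$, a polynomial in $n$. All leftover contributions are collected into
\[
    Y := \frac{1}{c_r}H_2 + \sum_{m=1}^M \Bigl( \beta^3 I + \beta^2\gamma_m s_m \sum_i P_{mi} + \beta\gamma_m^2 \sum_{i<j} P_{mi}P_{mj}\Bigr),
\]
so that by construction $H^{(3)} = c_r\bigl(Y - 6 \sum_m B_{m1}B_{m2}B_{m3}\bigr)$, and $Y$ is manifestly 2-local.

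The main remaining step — and the place where careful bookkeeping is required — is verifying $\|Y\| = \mO(1/n^6)$. The piece $H_2/c_r$ is bounded by $\|H^{(3)}\|/c_r = \mO(1/n^9)$. Each of the three sums over $m$ has $M=\mO(n^3)$ terms, each of operator norm at most a constant multiple of $\beta^3$, $\beta^2\gamma_m$, or $\beta\gamma_m^2$, all of which are $\mO(1/n^9)$; summing by the triangle inequality gives $\mO(n^3)\cdot\mO(1/n^9)=\mO(1/n^6)$, as claimed. The hard part is making the three constraints on the $B_{mi}$ hold simultaneously — (i) $\gamma_m\leq 1/n^3$ (forced by the choice of $c_r$), (ii) $\beta>\gamma_m$ with a $1/n^3$ margin so that $B_{mi}\succeq I/n^3$, and (iii) the prescribed coefficient $-6$ matches the cube $\gamma_m^3 s_m$ — while still keeping $\|Y\|$ within the $\mO(1/n^6)$ budget; once the parameters $(\beta,\gamma_m,s_m,c_r)$ above are fixed, everything else is routine algebra.
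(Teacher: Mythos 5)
Your argument is correct in substance, but note that the paper itself offers no proof of this statement: it is imported verbatim as Lemma~8 of~\cite{kempe2006}, and your construction is essentially a reconstruction of the KKR proof --- expand in the Pauli basis, realize each weight-3 Pauli string as the 3-local part of a product of shifted single-qubit operators $\beta I + s_m\gamma_m P_{mi}$, absorb the at-most-2-local cross terms into $Y$, and choose the rescaling $c_r = \Theta(\|H^{(3)}\|n^9)$ so that $\gamma_m \leq 1/n^3$ forces both $\|B_{mi}\| = \mO(1/n^3)$ and $B_{mi} \succeq I/n^3$. Three small repairs are needed, none of them a genuine gap: (i) your $Y$ is missing the factor $6$ in front of the sum of leftover terms (they must cancel the $-6$ prefactor on the full products), which changes nothing asymptotically; (ii) the bound $\|H_2\| \leq \|H^{(3)}\|$ is not automatic, since truncating a Pauli expansion to weight $\leq 2$ is not contractive in operator norm --- but the triangle inequality gives $\|H_2\| \leq (1+M)\|H^{(3)}\| = \mO(n^3)\|H^{(3)}\|$, which the $n^9$ rescaling absorbs, leaving $\|Y\| = \mO(1/n^6)$ intact; (iii) the requirement $1 \leq c_r \leq \poly(n)$ implicitly uses $\|H^{(3)}\| = \mO(\poly(n))$ (and a max with $1$ when the norm is tiny), which holds for the Hamiltonians the lemma is applied to here but should be stated as a hypothesis.
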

Now, define another Hamiltonian $H_\text{eff}$ on $n+3M$ qubits in Hilbert space  $\mathcal{H}_\text{eff} = (\mathbb{C}^{2})^{\otimes n} \otimes (\mathbb{C}^{2})^{\otimes 3M}$ as
\begin{align*}
    H_\text{eff}:= c_r\left( Y \otimes I_C-6 \sum_{m=1}^M B_{m1} B_{m2} B_{m3}  \otimes (\sigma^x_m)_C \right),
\end{align*}
where $C := \text{Span} (\ket{000},\ket{111})$ represents a logical qubit on which Pauli operators $\sigma^i_c$ can act. Let $\mathcal{H}_\text{eff} = \mathcal{P}_\text{+}\oplus\mathcal{P}_\text{else}$, where $\mathcal{P}_\text{+}$ is the subspace spanned by all eigenvectors of $H_\text{eff}$ in which all ancillary logical qubits are in the $\ket{+}$ state and $\mathcal{P}_\text{else}$ spanned by all other eigenvectors. Observe that we have that
\begin{align}
     \text{eig} \left( H_\text{eff}|_{\mathcal{P}_\text{+}} \right) = \text{eig} \left(H^{(3)} \right).
     \label{eq:low_energy_equiv}
\end{align}
The key idea is now that the low-energy sector of $ H_\text{eff}$ -- which is precisely the one corresponding to the eigenvalues that are identical to those of $H^{(3)}$ -- can be approximated by a 2-local gadget Hamiltonian 
\begin{align}
    \tilde{H} = Q + P,
    \label{eq:3to2gadget}
\end{align}
where 
\begin{align*}
    &Q = -\frac{c_r}{4 \mu^3} \sum_{m=1}^M I \otimes (\sigma^z_{m1} \sigma^z_{m2} + \sigma^z_{m1} \sigma^z_{m3} + \sigma^z_{m2} \sigma^z_{m3} - 3I),\\
    &P = c_r\left[Y+ \sum_{m=1}^M  \left(\frac{1}{\mu}(B_{m1}^2+B_{m2}^2+B_{M3}^2) \otimes I -\frac{1}{\mu^2} (B_{m1} \otimes \sigma_1^x + B_{m2} \otimes \sigma_2^x + B_{m3} \otimes \sigma_3^x)\right)\right],
\end{align*}
where $\mu > 0$ is some sufficiently small constant. Note that $Q$ has eigenvalues $0$ and $c_r \mu^{-3}$, and therefore a spectral gap of $\Lambda := c_r \mu^{-3}$. Since $\norm{B_{mi}} \leq \mO(1/n^3)$ and $M = \mO(n^3)$ we have that $\norm{P} = \mO(c_r \mu^{-2}) \leq \Lambda/2$. By perturbative analysis of the self-energy\footnote{See Appendix~\ref{app:proofs_2} for the definition, which is not relevant to understand the results in the main text.} of $\tilde{H}$, one can relate the spectra of $H_\text{eff}$ and $\tilde{H}$ in the following way:

\begin{lemma} [Modified Theorem 3 from Ref.~\cite{kempe2006}.]\label{lem:delta_close} Let $\Lambda:=c_r \mu^{-3}$ and $\tilde{H} = Q+P$ as in Eq.~\ref{eq:3to2gadget} with $\norm{P} = \mO(c_r \mu^{-2}) \leq \Lambda/2$. Let $\lambda_{*} = \Lambda/2$, $\lambda_+ = \lambda_{*} + \Lambda/2$ and $\lambda_- = \lambda_{*} - \Lambda/2$.  Then we have that 
\begin{align}
    | \lambda_j (\tilde{H}|_{\tilde{\mathcal{L}_-}}) - \lambda_j (H_\text{eff}) | \leq c_r \mu, 
\end{align}
for all $j$.
\end{lemma}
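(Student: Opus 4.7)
The plan is to follow, essentially verbatim, the self-energy (Schur-complement) analysis that underlies the proof of Theorem~3 in~\cite{kempe2006}, adapting it only to track the rescaling factor $c_r$ explicitly. Let $\Pi_-$ project onto $\mathcal{L}_- := \ker Q$, the subspace in which every ancilla triple lies in the code $C = \mathrm{span}(\ket{000},\ket{111})$, and set $\Pi_+ := I - \Pi_-$. Since $\mathrm{spec}(Q) = \{0,\Lambda\}$ with gap $\Lambda = c_r\mu^{-3}$ and $\|P\| \leq \Lambda/2$, the projected resolvent $\Pi_-(\tilde H - z)^{-1}\Pi_-$ is invertible on $\mathcal L_-$ for all $z$ in the interval $[\lambda_-,\lambda_+]$ (which is bounded away from $\Lambda$), and the associated self-energy operator $\Sigma_-(z)$ on $\mathcal L_-$ is well-defined there.

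The next step is to expand $\Sigma_-(z)$ as a convergent Neumann-type series in $P$ by iterating the resolvent identity. In block form with respect to $\Pi_\pm$,
\begin{equation*}
    \Sigma_-(z) = P_{--} + \sum_{k\geq 1} P_{-+}\bigl(G_0(z)P_{++}\bigr)^{k-1}G_0(z)P_{+-},
\end{equation*}
where $G_0(z) = (z-\Lambda)^{-1}\Pi_+$. Evaluating at $z \in [\lambda_-,\lambda_+]$ and truncating at third order in $P$, the $P_{--}$ piece contributes $c_r Y + (c_r/\mu)\sum_m(B_{m1}^2+B_{m2}^2+B_{m3}^2)\otimes I$, the $k=2$ diagonal term cancels the unwanted $B_{mi}^2/\mu$ self-energy shifts, and the $k=3$ off-diagonal term assembles into $-6c_r\sum_m B_{m1}B_{m2}B_{m3}\otimes(\sigma^x_m)_C$. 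Together these reproduce $H_\text{eff}$ on the relevant low-energy block, whose spectrum coincides with $\mathrm{spec}(H^{(3)})$ by~\eqref{eq:low_energy_equiv}.

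Using the norm bounds of Lemma~\ref{lem:H3} (namely $\|Y\|=\mO(1/n^6)$, $\|B_{mi}\|=\mO(1/n^3)$, $M=\mO(n^3)$) together with $\|G_0(z)\| \leq 2/\Lambda$, each term of order $k$ has norm at most $\mO(c_r\mu^{k-3})$, so the tail $R_{\geq 4}(z) := \Sigma_-(z) - H_\text{eff}$ satisfies $\|R_{\geq 4}(z)\| \leq \mO(c_r\mu)$ uniformly on $[\lambda_-,\lambda_+]$ provided $\mu$ is chosen small enough. I would then invoke the standard eigenvalue-closeness result of the self-energy framework --- if $\|\Sigma_-(z) - H_\text{eff}\| \leq \epsilon$ for all $z \in [\lambda_-,\lambda_+]$, then the ordered eigenvalues of $\tilde H|_{\tilde{\mathcal L}_-}$ and of $H_\text{eff}$ differ pairwise by at most $\epsilon$ --- with $\epsilon = c_r\mu$, yielding the stated bound.

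The main obstacle will be the bookkeeping in the second step: verifying that the second-order diagonal contributions exactly cancel the spurious $B_{mi}^2/\mu$ terms and that the third-order off-diagonal terms assemble into the codespace operator $-6c_r B_{m1}B_{m2}B_{m3}\otimes(\sigma^x_m)_C$ requires careful evaluation of how $\sigma^x_{mi}$ acts on $C = \mathrm{span}(\ket{000},\ket{111})$ and of the associated sums of products of resolvent matrix elements. The $k\geq 4$ tail estimate and the final invocation of the eigenvalue-comparison theorem are then routine, and the only genuine deviation from the original KKR proof is the explicit tracking of the $c_r$ prefactor, which is what produces the $c_r\mu$ (rather than dimensionless $\mu$) bound in the final estimate.
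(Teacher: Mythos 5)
The paper never actually proves Lemma~\ref{lem:delta_close}: it is imported as a $c_r$-tracking restatement of Theorem~3 of~\cite{kempe2006} together with the accompanying self-energy computation for the three-qubit gadget, so your reconstruction is precisely the argument that the citation stands on. Its skeleton is right: $P_{--}$ contributes $c_r Y + (c_r/\mu)\sum_{m,i} B_{mi}^2\otimes I$, the second-order term cancels the $B_{mi}^2/\mu$ shifts up to $\mO(c_r\mu)$ corrections, the third-order term assembles (using that the $B_{mi}$ act on distinct system qubits and that $\sigma^x_{m1}\sigma^x_{m2}\sigma^x_{m3}$ restricted to $C$ is the logical $(\sigma^x_m)_C$) into $-6c_r\sum_m B_{m1}B_{m2}B_{m3}\otimes(\sigma^x_m)_C$, and the $k\geq 4$ tail is $\mO(c_r\mu)$ for $\mu = 1/\poly(n)$ small enough.

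One step would fail as literally written: you claim, and then require for the eigenvalue-comparison step, that $\|\Sigma_-(z)-H_\text{eff}\|\leq c_r\mu$ holds \emph{uniformly on} $[\lambda_-,\lambda_+]=[0,\Lambda]$. For $z$ approaching $\lambda_+=\Lambda$ one has $\|G_0(z)\|\geq 1/|z-\Lambda|\to\infty$, so $\|P\,G_0(z)\|\geq 1$ and your Neumann series diverges; the uniform bound cannot be obtained this way, and it is not needed. The comparison theorem of~\cite{kempe2006} (their Theorem~3) only requires $\|\Sigma_-(z)-H_\text{eff}\|\leq\epsilon$ for $z$ in an $\epsilon$-neighbourhood of an interval $[a,b]\supseteq\mathrm{spec}(H_\text{eff})$ with $b<\lambda_*-\epsilon$. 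Since $\|H_\text{eff}\|=\mO(c_r/n^6)\ll\Lambda$ by Lemma~\ref{lem:H3}, in that window $\|P\,G_0(z)\|=\mO(\mu)$, the expansion converges, and the second-order cancellation holds up to the required accuracy (the deviation of $(z-\Lambda)^{-1}$ from $-\Lambda^{-1}$ is $\mO(|z|/\Lambda^2)$, which is negligible there). Two further small repairs: take $G_0(z)=(z-Q_{++})^{-1}$, whose eigenvalues are $(z-k\Lambda)^{-1}$ for $k\geq 1$ excited triples, rather than $(z-\Lambda)^{-1}\Pi_+$; and note that the third order also contains paths that flip an ancilla, apply the diagonal part of $P_{++}$, and flip back, which must be absorbed into the $\mO(c_r\mu)$ error along with the tail. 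With the hypothesis of the black-box theorem quoted correctly, your proof goes through and matches what the paper (implicitly) relies on.
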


\begin{figure}
    \centering
    \includegraphics{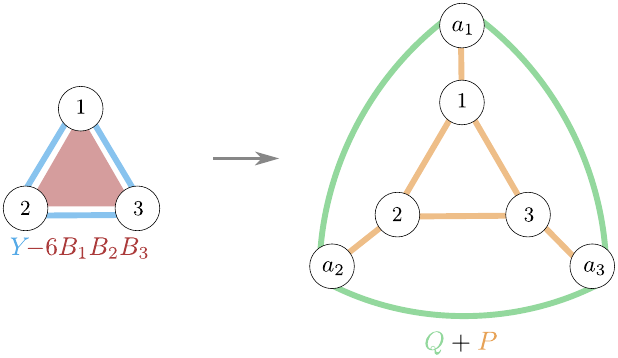}
    \caption{Visualization of the perturbative gadget for a Hamiltonian consisting of a single $3$-local term. The ancilla registers are indicated with `$a_i$'. }
    \label{fig:gadgetvisual}
\end{figure}

\subsubsection{Application to the low-energy gadget}
Define the minimum relative spectral gap $\gamma_i(H)$ of eigenstate $i$ of Hamiltonian $H$ as
\begin{align}
    \gamma_i(H) = \min_{j\neq i} \abs{\lambda_j(H) - \lambda_i(H)}.
\end{align}
In order to apply the $3$-to-$2$-local construction, we first need to establish the following:
\begin{enumerate}
    \item A lower bound on the fidelity between \emph{any eigenstate} of $H_\text{eff}$ and $\tilde{H}$ provided that their energies in the low-energy sector are $c_r \mu$-close, and not just between the groundstates (which is what is given in Ref.~\cite{kempe2006}).
    \item Since this turns out to depend on the relative energy differences between adjacent energy levels, it is sufficient to know all $\gamma_{i} (H_\text{eff})$ for eigenstates up to $i=c$.
\end{enumerate}
Let us now address both points. For the latter, observe that we actually already know all the relative energy gaps of $H_\text{eff}$ up to, but not including, $c$ from the way $H^{(c)}$ is constructed. We also then have a bound on the spectral gap of $H$ from Corollary~\ref{lem:specgapH}, $\gamma(H) \geq \Delta-2 = \poly(n)$, which gives the remaining relative energy gap.

Now we consider the first point, which is addressed by the following lemma.
\begin{restatable}{lemma}{lemmaNine}
Assume that $H_\text{eff}$, $Q$, $P$ satisfy the conditions of Lemma~\ref{lem:delta_close} with some $\mu > 0$. Let $\ket{\tilde{v}_i}$ (resp., $\ket{v_{\text{eff},i}}$) denote the $i$th eigenstate of $\tilde{H}$ (resp., $H_\text{eff}$). Let $\gamma_i (H_\text{eff})$ denote the minimum relative spectral gap of $H_\text{eff}$. Then we have that
\begin{align}
    \abs{\bra{\tilde{v}_i}\ket{v_{\text{eff},i}}}^2 &\geq 1-\left(\frac{\norm{P}}{c_r \mu^{-3}-\lambda_i(H_\text{eff})-c_r\mu} + \sqrt{\frac{2 c_r\mu}{\gamma_i (H_\text{eff})}} \right)^2,
\end{align}
for all $i$.
\label{lem:fidelity_pert}
\end{restatable}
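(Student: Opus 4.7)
The plan is to decompose the eigenvector $\ket{\tilde{v}_i}$ of $\tilde{H} = Q + P$ according to the low- and high-energy subspaces $\tilde{\mathcal{L}}_\pm$ of $Q$, bound each component separately, and combine them. Write $\ket{\tilde{v}_i} = \alpha\ket{w_-} + \beta\ket{w_+}$ with $\ket{w_\pm} \in \tilde{\mathcal{L}}_\pm$ unit vectors and $|\alpha|^2 + |\beta|^2 = 1$.

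For the high-energy weight, I would project the eigenvalue equation $\tilde{H}\ket{\tilde{v}_i} = \lambda_i(\tilde{H})\ket{\tilde{v}_i}$ onto $\tilde{\mathcal{L}}_+$ using the projector $\Pi_+$. Since $Q$ acts as $c_r\mu^{-3}I$ on $\tilde{\mathcal{L}}_+$, rearranging yields $\beta\ket{w_+} = (\lambda_i(\tilde{H}) - c_r\mu^{-3})^{-1}\Pi_+ P \ket{\tilde{v}_i}$, hence
\[
    |\beta| \;\leq\; \frac{\norm{P}}{c_r\mu^{-3} - \lambda_i(\tilde{H})} \;\leq\; \frac{\norm{P}}{c_r\mu^{-3} - \lambda_i(H_\text{eff}) - c_r\mu}\,,
\]
where the second step uses $\lambda_i(\tilde{H}) \leq \lambda_i(H_\text{eff}) + c_r\mu$ from Lemma~\ref{lem:delta_close}. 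This reproduces the first term in the claimed bound.

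For the low-energy weight, identify $\tilde{\mathcal{L}}_-$ with the space on which $H_\text{eff}$ acts (via the isomorphism sending all-$\ket{+}$ ancilla logical states to the corresponding computational basis states), and expand $\ket{w_-} = \sum_j c_j \ket{v_{\text{eff},j}}$ in the eigenbasis of $H_\text{eff}$. Using the self-energy analysis underpinning Lemma~\ref{lem:delta_close} to obtain operator-norm closeness $\norm{\tilde{H}|_{\tilde{\mathcal{L}}_-} - H_\text{eff}} = \mO(c_r\mu)$, together with the fact that (up to the $\mO(\|P\|/(\Lambda-\lambda_i))$ correction captured by $\beta$) $\ket{w_-}$ is an eigenstate of $\tilde{H}|_{\tilde{\mathcal{L}}_-}$ with eigenvalue within $c_r\mu$ of $\lambda_i(H_\text{eff})$, I would conclude $\norm{(H_\text{eff} - \lambda_i(H_\text{eff}))\ket{w_-}} = \mO(c_r\mu)$. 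A variational argument applied to the positive operator $(H_\text{eff} - \lambda_i(H_\text{eff}))^2$, whose unique zero eigenvector is $\ket{v_{\text{eff},i}}$ and whose spectral gap is at least $\gamma_i(H_\text{eff})^2$, then yields
\[
    |\langle w_-|v_{\text{eff},i}\rangle|^2 \;\geq\; 1 - \frac{2c_r\mu}{\gamma_i(H_\text{eff})}\,.
\]

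To combine the two estimates, note that $\ket{v_{\text{eff},i}}$ lies entirely in $\tilde{\mathcal{L}}_-$ under the identification, so $\langle\tilde{v}_i|v_{\text{eff},i}\rangle = \alpha\langle w_-|v_{\text{eff},i}\rangle$ and therefore
\[
    |\langle\tilde{v}_i|v_{\text{eff},i}\rangle|^2 \;=\; |\alpha|^2\,|\langle w_-|v_{\text{eff},i}\rangle|^2 \;\geq\; (1-|\beta|^2)\left(1 - \frac{2c_r\mu}{\gamma_i(H_\text{eff})}\right) \;\geq\; 1 - \left(|\beta| + \sqrt{\frac{2c_r\mu}{\gamma_i(H_\text{eff})}}\right)^2\,,
\]
where the final step uses the elementary inequality $(1-a)(1-b) \geq 1 - (\sqrt{a}+\sqrt{b})^2$ for $a,b \in [0,1]$. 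Substituting the bound on $|\beta|$ from the first step completes the proof. The main obstacle is the excited-state case $i>0$: the usual variational bound $|c_0|^2 \geq 1 - \delta/\gamma$ fails for excited states since lower eigenstates contribute negatively to the energy excess, forcing one to pass to the squared operator $(H_\text{eff} - \lambda_i(H_\text{eff}))^2$ whose ground state is precisely $\ket{v_{\text{eff},i}}$. Invoking this trick requires operator-norm closeness of $\tilde{H}|_{\tilde{\mathcal{L}}_-}$ to $H_\text{eff}$ rather than only the eigenvalue closeness formally stated in Lemma~\ref{lem:delta_close}; extracting this stronger statement from the underlying self-energy expansion, and keeping track of the additional $\mO(|\beta|)$ correction from $\ket{w_-}$ not being an exact eigenstate of $\tilde{H}|_{\tilde{\mathcal{L}}_-}$, is the main technical step.
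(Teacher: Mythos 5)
Your treatment of the high-energy component is essentially the paper's: projecting the eigenvalue equation onto the upper eigenspace of $Q$ and invoking $\lambda_i(\tilde H)\leq\lambda_i(H_\text{eff})+c_r\mu$ from Lemma~\ref{lem:delta_close} reproduces the bound $\norm{\Pi_+\ket{\tilde v_i}}\leq \norm{P}/(c_r\mu^{-3}-\lambda_i(H_\text{eff})-c_r\mu)$, and your recombination inequality is the same triangle-type step used there. The genuine gap is in the low-energy half. The input you need, $\norm{(H_\text{eff}-\lambda_i(H_\text{eff}))\ket{w_-}}=\mO(c_r\mu)$, rests on an operator-norm closeness $\norm{\tilde H|_{\tilde{\mathcal{L}}_-}-H_\text{eff}}=\mO(c_r\mu)$ which you yourself flag as the main unproven step; it does not follow from Lemma~\ref{lem:delta_close}, which is only a statement about eigenvalues, and it is not even well-posed without extra work, since $\tilde H|_{\tilde{\mathcal{L}}_-}$ and $H_\text{eff}$ act on different subspaces ($\tilde{\mathcal{L}}_-$ is the low-energy space of $\tilde H$, whereas $H_\text{eff}$ and your vector $\ket{w_-}=\Pi_-\ket{\tilde v_i}/\norm{\Pi_-\ket{\tilde v_i}}$ live in the zero space $\mathcal{L}_-$ of $Q$). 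Setting up that identification and proving the closeness amounts to redoing a Schrieffer--Wolff-type analysis. A first-order shortcut does not rescue it: projecting $\tilde H\ket{\tilde v_i}=\lambda_i(\tilde H)\ket{\tilde v_i}$ onto $\mathcal{L}_-$ only shows that $\ket{w_-}$ is an approximate eigenvector of $P_{--}$, not of $H_\text{eff}$, with residual of order $\norm{P}\cdot\mO(\mu)=\mO(c_r\mu^{-1})$; the second- and third-order self-energy contributions are exactly what generate the effective $3$-local interaction, so they cannot be dropped.

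The paper supplies this ingredient differently, and more cheaply: by the Schur-complement identity, $\Pi_-\ket{\tilde v_i}$ is an \emph{exact} eigenvector of the self-energy $\Sigma_-(\lambda_i(\tilde H))$ with eigenvalue $\lambda_i(\tilde H)$, and the gadget analysis underlying Lemma~\ref{lem:delta_close} gives $\norm{\Sigma_-(z)-H_\text{eff}}\leq c_r\mu$ in the relevant window; hence $\bra{\tilde v_{i,-}}H_\text{eff}\ket{\tilde v_{i,-}}\leq \lambda_i(H_\text{eff})+2c_r\mu$, and expanding in the eigenbasis of $H_\text{eff}$ with gap $\gamma_i(H_\text{eff})$ yields $\abs{\bra{\tilde v_{i,-}}\ket{v_{\text{eff},i}}}^2\geq 1-2c_r\mu/\gamma_i(H_\text{eff})$ directly, with no operator closeness of restricted Hamiltonians required. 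If you substitute this self-energy fact for your missing step, your residual bound becomes $\norm{(H_\text{eff}-\lambda_i(H_\text{eff}))\ket{w_-}}\leq 2c_r\mu$ and your squared-operator variational argument does go through, giving a bound of the form $1-\mO((c_r\mu/\gamma_i)^2)$, which is at least as strong as the stated one once $2c_r\mu\leq\gamma_i$ (guaranteed by the choice of $\mu$ in Theorem~\ref{thm:excited_states}); moreover, passing to $(H_\text{eff}-\lambda_i)^2$ is insensitive to weight on \emph{lower} eigenstates, a point the paper's direct energy-expectation step treats only implicitly for $i>0$. So the overall strategy is sound and even clarifies the excited-state case, but as written the central closeness input is asserted rather than proven.
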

\noindent The proof of Lemma~\ref{lem:fidelity_pert} is given in Appendix~\ref{app:proofs_2}. We now have all the tools at our disposal to prove the following theorem.
\begin{theorem} 
$\glh$ is $\BQP$-hard for $k\geq 2$, $ 1 \leq c \leq \mO(\poly(n))$, $\delta = 1/\poly(n)$ and $\zeta = 1-1/\poly(n)$. 
\label{thm:excited_states}
\end{theorem}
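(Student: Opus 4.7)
Reduce from the $3$-local hardness just established by applying the $3$-to-$2$ perturbative gadget of Eq.~\eqref{eq:3to2gadget} to the Hamiltonian $H^{(c)}$ of Eq.~\eqref{eq:Hc}. This yields a $2$-local Hamiltonian $\tilde H = Q+P$ whose low-energy sector $\tilde{\mathcal L}_-$ approximates $H_\text{eff}|_{\mathcal P_\text{+}}$, which in turn carries the same spectrum as $H^{(c)}$ by Eq.~\eqref{eq:low_energy_equiv}. The target $c$-th eigenstate in this low-energy sector is, up to perturbative error, $\ket{\psi^{(c)}_c}\otimes\ket{+}_L^{\otimes M}$, where $\ket{\psi^{(c)}_c} = \ket{\psi_0(H)}\otimes\ket{1}$ is the $c$-th eigenstate of $H^{(c)}$ (it lives in the $\ket{1}$-control block, by the construction of $H^{(z)}$). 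The new guiding state is constructed as $\ket{\tilde u}:=\ket{u^{(c)}}\otimes\ket{+}_L^{\otimes M}$.

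\textbf{Key quantitative steps.} Pick $\mu = 1/\poly(n)$ small enough that $c_r\mu\leq\delta/3$; Lemma~\ref{lem:delta_close} then yields $|\lambda_c(\tilde H|_{\tilde{\mathcal L}_-})-\lambda_c(H^{(c)})|\leq\delta/3$, so the promise gap shrinks only by a constant multiplicative factor and the yes/no distinction at level $c$ survives. For eigenstates, apply Lemma~\ref{lem:fidelity_pert} at $i=c$: the relative spectral gap $\gamma_c(H_\text{eff})$ is at least $1/\poly(n)$ because (i) within the $\ket{0}$-control block $H^{(z)}$ is integer-spaced, so $\lambda_c(H^{(c)})$ is at distance $\geq 1/2 - 1/4 = 1/4$ from any neighbour in that block, and (ii) within the $\ket{1}$-control block $H^{(s)}$ inherits the $\poly(n)$ spectral gap of $H$ established in Corollary~\ref{lem:specgapH}. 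Shrinking $\mu$ further (still $1/\poly(n)$) makes the RHS of Lemma~\ref{lem:fidelity_pert} at least $1-1/\poly(n)$, giving $|\braket{\tilde v_c}{v_{\text{eff},c}}|^2\geq 1-1/\poly(n)$. Composing this with $|\braket{u^{(c)}}{\psi^{(c)}_c}|^2\geq 1-1/\poly(n)$ (which follows immediately from the tensor-product structure of $\ket{u^{(c)}} = \ket{u}\otimes\ket{1}$ and $\ket{\psi^{(c)}_c} = \ket{\psi_0(H)}\otimes\ket{1}$ together with Eq.~\eqref{eq:fidelity_u_gs}) via the standard trace-distance triangle inequality already used in Eq.~\eqref{eq:fidelity_u_gs} yields the desired $|\braket{\tilde u}{\tilde v_c}|^2 \geq 1-1/\poly(n)$.

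\textbf{Main obstacle.} The principal subtlety lies in verifying that $\ket{\tilde u}$ qualifies as semi-classical: the ancilla factor $\ket{+}_L^{\otimes M}$ has support $2^M$ in the computational basis, exponentially larger than the literal definition allows. This is naturally resolved by noting that $\ket{+}_L^{\otimes M}$ is a tensor product of polynomially many independent $3$-qubit pieces $\ket{+}_L=(\ket{000}+\ket{111})/\sqrt 2$, each of which is trivially classically samplable and amplitude-computable; hence $\ket{\tilde u}$ retains exactly the property—efficient classical sampling and amplitude access—that motivates the semi-classical definition in the first place, and one may therefore broaden the definition accordingly, or alternatively modify the gadget so that the distinguished ancilla state is a computational basis vector via a local change of basis on each triple that preserves $2$-locality of the gadget terms. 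Beyond this, the remaining work is routine bookkeeping to simultaneously pick $\mu$, $\Delta$, $M$, and the perturbative prefactors so that $\delta=1/\poly(n)$, $\zeta=1-1/\poly(n)$, $\|\tilde H\|\leq 1$ (after an overall rescaling), and $k=2$ all hold at once.
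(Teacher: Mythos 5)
Your proposal is correct and follows essentially the same route as the paper's proof: gadgetize $H^{(c)}$ (via its rewriting as $H^{(3)}$) with the Kempe--Kitaev--Regev $3$-to-$2$ perturbative construction, invoke Lemma~\ref{lem:delta_close} for closeness of the low-energy spectrum, bound $\gamma_c(H_\text{eff})$ using the integer-spaced $H^{(z)}$ block together with Corollary~\ref{lem:specgapH}, apply Lemma~\ref{lem:fidelity_pert} at $i=c$, take the guiding state $\ket{u}\otimes\ket{1}\otimes\ket{+}^{\otimes M}$ (logical $\ket{+}$ on each ancilla triple), compose fidelities by the triangle inequality as in Eq.~\eqref{eq:fidelity_u_gs}, and rescale $\tilde H$. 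The only divergence is your ``main obstacle'': the paper simply asserts that this guiding state is semi-classical, and while your observation that its computational-basis support is exponential under the literal definition is a fair point (resolved in spirit by the sampling-and-query access the definition is meant to capture), your alternative fix of a locality-preserving change of basis on each triple does not work as stated, since the logical $\ket{+}=(\ket{000}+\ket{111})/\sqrt{2}$ is GHZ-entangled and no product of single-qubit basis changes maps it to a computational basis vector, while a genuine three-qubit unitary would spoil the $2$-locality of the conjugated gadget terms.
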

\begin{proof}
Let $H^{(c)}$ be the Hamiltonian as in Eq.~\eqref{eq:Hc}. We first construct $H^{(3)}$ from $H^{(c)}$ according to Lemma~\ref{lem:H3}, and next $H_\text{eff}$ from $H^{(3)}$. From this we then construct the 2-local Hamiltonian $\tilde{H}$. The steps of the proof are summarized in Figure~\ref{fig:reductionvisual}.

We have that for $H^{(3)}$ (and therefore for $H_\text{eff}$) that
\begin{align}
    \gamma_i \left(H^{(3)}\right) \geq
    \min\left[\frac{\gamma(H)}{\norm{H}+1/4} ,1/4\right] \text{ for } i \leq c,
\end{align}
since the spectrum of $H^{(3)}$ is such that it jumps in increments of 1 until the jump to the groundstate of $H$, which is at least $1/4$, followed by the next jump to the first excited state of $H$, which is at least $\frac{\gamma(H)}{\norm{H}+1/4}$. A bound on $\gamma(H)$ is known from Lemma~\ref{lem:specgapH}. By Lemma~\ref{lem:delta_close} we have that
\begin{align*}
    | \lambda_j (\tilde{H}|_{\tilde{\mathcal{L}_-}}) - \lambda_j (H_\text{eff}) | \leq c_r \mu,
\end{align*}
provided that $\mu$ is chosen such that all conditions of Lemma~\ref{lem:delta_close} hold. Additionally, $\mu$ should also be small enough such that the ordering of the eigenvalues of $\tilde{H}$ in the low energy sector respects the original ordering of those in $H_\text{eff}$.  Setting $\mu = 1/\poly(n)$ such that
\begin{align*}
   \mu < \frac{1}{2} \frac{\gamma(H)}{\norm{H}+1/4},
\end{align*}
is sufficient. Let $\ket{\tilde{u}_{c}}=\ket{u} \otimes \ket{1} \otimes \ket{+}^M$, which is a semi-classical state since $\ket{u}$ is semi-classical. Then, by  Lemma~\ref{lem:fidelity_pert}, we have that
\begin{align*}
    \abs{\bra{\tilde{v}_c}\ket{v_{\text{eff},c}}}^2 &\geq 1-\left(\frac{\norm{P}}{c_r \mu^{-3}-\lambda_c(H_\text{eff})-c_r\mu} + \sqrt{\frac{2 c_r\mu}{\gamma_c (H_\text{eff})}} \right)^2,
\end{align*}
which can be made $\geq 1-1/\poly(n)$ since $\norm{P} = \mO(c_r \mu^{-2})$, $\lambda_c(H_\text{eff}) = \mO(1)$. This in combination with Eq.~\eqref{eq:fidelity_u_gs} with $\ket{\phi}=\ket{v_{\text{eff},c}}$ and $\ket{u}=\ket{\tilde{u}_c}$ gives us
\begin{align*}
    \abs{\bra{\tilde{v}_c}\ket{\tilde{u}_c}}^2 &\geq 1-\left(\sqrt{1-\abs{\bra{\tilde{v}_c}\ket{v_{\text{eff},c}}}^2}  + \sqrt{1-\abs{\bra{\tilde{u}_c }\ket{v_{\text{eff},c}}}^2}\right)^2,
\end{align*}
which again can be made $\geq 1-1/\poly(n)$. Since fidelity is invariant under changes to the norm of $\tilde{H}$, we can scale down $\tilde{H}$ such that its operator norm becomes smaller than 1. We have that $\norm{\tilde{H}} = \norm{Q+P} \leq \norm{Q} + \norm{P} = \mO(c_r \mu^{-3}) = \mO(\poly(n))$, so the re-scaling factor need only be inverse polynomial in $n$.
\qedhere
\end{proof}

\begin{figure}
    \centering
    \includegraphics[width=\linewidth]{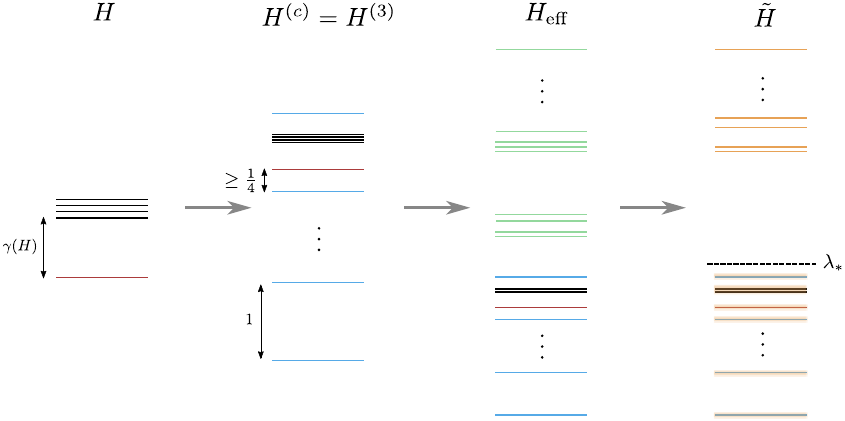}
    \caption{Visualization of all steps in the reduction used in the proof of theorem~\ref{thm:excited_states}.}
    \label{fig:reductionvisual}
\end{figure}

\noindent This concludes the proof of Theorem~\ref{thm:main_result}, which follows straightforwardly from Lemma~\ref{lem:eigenvalues} and Theorem~\ref{thm:excited_states}.

\section{Containment in $\BQP$}\label{sec:containment}
In this section we show that $\glh$ is contained in $\BQP$ -- i.e. we prove Theorem~\ref{thm:main_result2}, which we restate below for convenience. 
First, we recall some basic facts about combining Hamiltonian simulation with quantum phase estimation.
\begin{lemma}[Quantum eigenvalue estimation]\label{thm:phase_estimation}
Let $H$ be an $\mO(\log n)$-local Hamiltonian acting on $n$ qubits, with eigenvectors $\ket{\psi_j}$ and corresponding eigenvalues $\lambda_j \in [0,1]$. Then there is a quantum algorithm that, given as input an eigenvector $\ket{\psi_j}$, will output with probability at least $p$ an $\epsilon$-approximation of $\lambda_j$ (i.e. an estimate $\tilde{\lambda_j}$ such that $|\tilde{\lambda_j} - \lambda_j| \leq \epsilon$) in time $\poly(n,1/\epsilon,1/p)$.
\end{lemma}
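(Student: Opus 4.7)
The plan is to combine efficient Hamiltonian simulation with textbook quantum phase estimation. Since $H$ is $\mO(\log n)$-local on $n$ qubits with $\|H\|\leq 1$, it decomposes as a sum of $m=\poly(n)$ terms $H=\sum_{i=1}^m H_i$ where each $H_i$ acts nontrivially on $\mO(\log n)$ qubits; each such local term can be diagonalized and exponentiated exactly by a quantum circuit of size $\poly(n)$, so the primitives needed for simulation are efficient.

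First I would invoke a standard Hamiltonian simulation algorithm (e.g.\ Lie--Trotter--Suzuki, or a modern qubitization/LCU-based scheme) to construct, for a time $t=\mO(1)$ and precision $\eta$ to be fixed later, a unitary $\tilde U$ such that $\|\tilde U - e^{-iHt}\|\leq \eta$, using a circuit of size $\poly(n,m,1/\eta,t)=\poly(n,1/\eta)$. Second, I would plug $\tilde U$ into the standard quantum phase estimation routine with $s=\mO(\log(1/\epsilon)+\log(1/(1-p)))$ ancilla qubits. Given access to an exact eigenvector $\ket{\psi_j}$ of $H$, which is also an eigenvector of $e^{-iHt}$ with phase $-\lambda_j t$, Kitaev's analysis of QPE returns an $(\epsilon/2)$-estimate of $\lambda_j$ with probability at least $p$ using $\mO(1/\epsilon)$ applications of $\tilde U$ and its controlled version.

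Third, I would track the error contributions. The total circuit implementing the phase estimation invokes $\tilde U$ (and controlled-$\tilde U^{2^k}$ for $k\le s$) a total of $\mO(1/\epsilon)$ times; choosing $\eta = \Theta(\epsilon\cdot(1-p))$, a triangle inequality on the errors in each application shows that the statistics produced by QPE using $\tilde U$ are within total variation distance $\mO(1-p)$ of the ideal distribution. Hence the output is $\epsilon$-close to $\lambda_j$ with probability at least $p$, and the overall runtime is $\poly(n,1/\epsilon,1/(1-p))$; absorbing $1/(1-p)$ into $\poly(1/p)$ if desired (or more straightforwardly, using majority amplification over $\mO(\log(1/(1-p)))$ independent repetitions of a constant-success-probability QPE) gives the stated complexity.

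The main technical point to verify carefully is the error composition between Hamiltonian simulation and QPE: since QPE invokes powers $\tilde U^{2^k}$, naively the error grows like $2^s\eta$, so one must either use the controlled-$e^{-iHt\cdot 2^k}$ constructions directly (simulating each longer evolution separately with its own simulation error budget) or choose $\eta$ small enough to absorb the $2^s=\mO(1/\epsilon)$ factor; either way this only inflates the simulation cost polynomially, preserving the $\poly(n,1/\epsilon,1/p)$ bound. No other step poses a real obstacle, and parts (i) and (ii) of Theorem~\ref{thm:main_result2} will then follow by feeding the guiding state (rather than an exact eigenstate) into this subroutine and arguing, via the promised fidelity with the target eigenstate, that the correct energy is recovered---using the variational principle for (i) (smallest observed estimate) and the plurality rule for (ii) (most frequent estimate).
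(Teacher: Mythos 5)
Your proposal is correct and follows exactly the route the paper takes: the paper does not prove this lemma itself but simply cites the standard combination of Hamiltonian simulation with quantum phase estimation (referring to~\cite{gharibian2021dequantizing,cade2018quantum}), which is precisely the argument you spell out, including the error budgeting between the simulation subroutine and the QPE powers. The only minor caveat is that the success-probability boost should be phrased via $\mO(\log(1/(1-p)))$-fold repetition (as you note parenthetically) rather than ``absorbing $1/(1-p)$ into $\poly(1/p)$'', but this does not affect correctness for the parameter regimes in which the lemma is used.
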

\noindent This is by now a commonly used quantum algorithm; for details and proofs of correctness, see e.g.~\cite{gharibian2021dequantizing,cade2018quantum}.
\bqpcontainment*
\begin{proof}
Recall that the fidelity of the guiding state with the target eigenstate is at least $\zeta$. Containment in $\BQP$ follows from the standard quantum algorithm of Lemma~\ref{thm:phase_estimation}. 
If we input an arbitrary $n$-qubit state $\ket{\phi}$ to the algorithm of Lemma~\ref{thm:phase_estimation}, it follows that we will obtain an $\epsilon$-approximation of $\lambda_j$ with probability $\geq p|\braket{\psi_j}{\phi}|^2$, and hence if we input the guiding state $\ket{u}$, we will obtain an $\epsilon$-approximation to the target eigenstate with probability $\geq p\zeta$. For the case $c=0$, with $\zeta = \Omega(1/\poly(n))$, we can therefore obtain an $\epsilon$-approximation to the groundstate energy with probability $\geq 1/r(n)$, for $r$ some polynomial, in time $\poly(n,1/\epsilon)$. To distinguish the case that the groundstate energy is $\leq a$ or $\geq b$, with $b-a\geq \delta$, setting $\epsilon < \delta$ is sufficient. With $\delta = 1/\mO(\poly(n))$, this takes time $\poly(n)$, proving part (i) of the theorem. 

For the case $c > 0$, and with $\zeta = \frac12 + \Omega(1/\poly(n))$, we can choose $p > 1-1/\poly(n)$ sufficiently large so that with probability at least $p\zeta > 1 - 1/\poly(n)$ we obtain an $\epsilon$-approximation to the target energy $\lambda$. Again by choosing $\epsilon < \delta$ we can decide whether $\lambda \leq a$ or $\lambda \geq b$ with probability $\geq \frac12 + 1/\poly(n)$. By repeating $\poly(n)$ times and taking a majority vote, we can decide which is the case with probability, say, $2/3$ by a Chernoff bound, proving part (ii) of the theorem.

\end{proof}
\section{Discussion, conclusion and outlook}
In this work, we have generalized and strengthened the results from~\cite{gharibian2021dequantizing}. We introduced the $\mathsf{GLHLE}$-problem, which generalizes their $\mathsf{GLH}$ problem to include excited states, and improved the $\BQP$-completeness result so that it holds over a larger range of parameter settings, in particular: for the locality (from $k\geq 6$ to $k\geq 2$), promise on the fidelity of the guiding state  with the ground state (upper bound from $1/2-\Omega(1/\poly(n)$ to $1-\Omega(1/\poly(n)$) and the considered eigenstate (from just $c=0$ to any $c=\mO(\poly(n))$).

All constructions used throughout this work rely on the Feynman-Kitaev circuit-to-Hamiltonian mapping, and therefore do not exhibit any particular structure found in physical systems except from locality constraints. As future work, it would be interesting to see if Hamiltonians more closely related to actual physical systems, such as the electronic structure Hamiltonian, still adhere to this $\BQP$-Hardness result. Recent work~\cite{o2021electronic} has indeed shown that the canonical $\QMA$-hardness results for the local Hamiltonian problem do indeed hold for electronic structure Hamiltonians, and so we expect that the $\BQP$-hardness results would hold also.

\paragraph{Where does the $\BQP$-hardness lie?}
In Section 1.4 of~\cite{gharibian2021dequantizing} the authors argue that the $\BQP$-hardness of the problem lies in the fact that the required precision is inverse-polynomial, since their dequantized classical algorithm is efficient when the required precision is merely a constant. However, recent work by Stroeks, Helsen and Terhal \cite{stroeks2022spectral} adds some more detail to the picture: there, the authors show that, given a local Hamiltonian with $1/\poly(n)$ spectral gap and a semi-classical guiding state that has non-negligible ($\Omega(1/\poly(n))$) fidelity with the ground state and \textit{at most $\mO(1)$ other excited states}, the ground state energy can be computed up to inverse polynomial precision in polynomial time.\footnote{The authors make use of a slightly different access model than in~\cite{gharibian2021dequantizing}.} This result suggests that the $\BQP$-completeness lies not only in the required precision, but \textit{also} in the properties of the guiding state -- namely that it must fidelity significantly with the ground-state, but also with many (i.e. $\omega(1)$) other eigenstates.

Our result shows that the $\mathsf{GLHLE}$-problem is $\BQP$-hard given a semi-classical guiding state even when it has as large as $1-1/\poly(n)$ fidelity with the ground state. Therefore, it must be that in the remaining $1/\poly(n)$ amplitude, the guiding state has a fidelity with $\Omega(\poly(n))$ many other eigenstates -- otherwise the techniques from~\cite{stroeks2022spectral} would be sufficient to solve the problem classically, and we would have $\BQP \subseteq \mathsf{BPP}$. To confirm this, one would need to study the form of the excited states of the Kitaev clock-Hamiltonian, for which we do not currently have a very clear understanding.



\section*{Acknowledgements}
We are grateful to Sevag Gharibian, Ryu Hayakawa, Fran{\c{c}}ois Le Gall and Tomoyuki Morimae for sharing their manuscript. We also thank Jonas Helsen for feedback on an earlier draft, and Ronald de Wolf for helpful comments. CC acknowledges support from QuSoft and CWI, as well as the University of Amsterdam (UvA) under a POC (proof of concept) fund. MF and JW were supported by the Dutch Ministry of Economic Affairs and Climate Policy (EZK), as part of the Quantum Delta NL programme.
\bibliography{main.bib}
\bibliographystyle{alpha}

\appendix 
\section{Technical proofs}
\label{app:proofs}

\subsection{Proof of Lemma~\ref{lem:eigenvalues}}\label{app:proof_1}
\lemEigenvalues*
\begin{proof}
If the circuit $V$ accepts input $\ket{x,0}$ with probability $1-\epsilon$, then the (standard) history state
    \[
        \ket{\eta} = \frac{1}{\sqrt{T+M}} \sum_{t=1}^{T+M} U_t\cdots U_1\ket{x,0}\otimes\ket{\hat{t}}
    \]
satisfies $\braket{\eta}{ H | \eta} = \epsilon$, implying that the groundstate of $H$ has energy $\leq \epsilon$, proving the upper bound. This can be checked by direct calculation, by verifying that
\[
    \braket{\eta}{H_{\text{clock}}|\eta} = \braket{\eta}{H_{\text{prop1}}|\eta} = \braket{\eta}{H_{\text{prop2}}|\eta} = \braket{\eta}{H_{\text{in}}|\eta} = 0\,
\]
and that $\braket{\eta}{H_{\text{out}}|\eta} = \epsilon$. The fact that $\braket{\eta}{H_{\text{prop2}}|\eta}$=0 is easier to see later in the proof, when we consider the restriction of $H_{\text{prop2}}$ to the nullspace $S_{\text{prop1}}$ of $H_{\text{prop1}}$. However, it is worth pointing out already that $H_{\text{prop2}}$ is \emph{not} positive semi-definite, meaning that $\braket{\eta}{H_{\text{prop2}}|\eta} = 0$ does not imply that $H_{\text{prop2}}\ket{\eta}=0$ (which does not in fact hold). In~\cite{kempe2006}, $H_{\text{prop2}}$ is eventually `replaced' by an effective (3-local) Hamiltonian that \emph{is} positive semi-definite within a particular relevant subspace. 

In the remainder of the proof we prove the lower bound of $\epsilon - \frac12$, by following closely the proof given in~\cite{kempe2006}. The approach taken there proceeds by repeatedly applying the projection lemma (Lemma~\ref{lem:projection}) to cut out pieces of the Hilbert space, in the following order
\[
    \mathcal{H} \supset \cS_{\text{legal}} \supset \cS_{\text{prop1}} \supset \cS_{\text{prop}} \supset \cS_{\text{in}}\,
\]
where $\cS_{\text{legal}}$ corresponds to the space spanned by legal clock states of the form $\ket{\hat{t}}$ given in Eq.~\eqref{eq:unary_clock}; $\cS_{\text{prop1}}$ to the space spanned by states in $\cS_{\text{legal}}$ that encode correction propagation according to all single-qubit gates (i.e. at time steps in $T_1$); $\cS_{\text{prop}}$ is the space spanned by states in $\cS_{\text{prop1}}$ that encode correct propagation via two-qubit gates (i.e. correct propagation of the regularly applied $CZ$ gates); and finally $\cS_{\text{in}}$ is the space spanned by the nullspace of $H_{\text{in}}|_{\cS_{\text{prop}}}$.

\paragraph{Restriction to $\cS_{\text{legal}}$} We begin by restricting the full Hamiltonian to the $2^n (T+M)$-dimensional space $\cS_{\text{legal}}$ spanned by states with a valid clock component of the form $\ket{\hat{t}}$. We then apply Lemma~\ref{lem:projection} with
\[
    H_1 = H_{\text{out}} + J_{\text{in}}H_{\text{in}} + J_2H_{\text{prop2}} + J_1H_{\text{prop1}} \qquad H_2 = J_{\text{clock}}H_{\text{clock}}\,.
\]
$\cS_{\text{legal}}$ is precisely the 0-eigenspace of $H_2$, and any state orthogonal to $\cS_{\text{legal}}$ has energy at least $J_{\text{clock}}$. Choosing $J_{\text{clock}} = \poly(\|H_1\|) = \poly(n)$ is sufficient to lower bound $\lambda_0(H)$ by $\lambda_0(H_1|_{\cS_{\text{legal}}}) - \frac18$. We can now concern ourselves only with lower bounding $\lambda_0(H_1|_{\cS_{\text{legal}}})$. The Hamiltonian $H_1|_{\cS_{\text{legal}}}$ takes the form
\[
  H_{\text{out}}|_{\cS_{\text{legal}}} + J_{\text{in}}H_{\text{in}}|_{\cS_{\text{legal}}} + J_2H_{\text{prop2}}|_{\cS_{\text{legal}}} + J_1H_{\text{prop1}}|_{\cS_{\text{legal}}}\,,  
\]
with
\begin{align*}
    H_{\text{in}}|_{\cS_{\text{legal}}} &= \sum_{i=0}^{m-1} \proj{x_i}_i \otimes \proj{\hat{0}} \\
    H_{\text{out}}|_{\cS_{\text{legal}}} &= (T+M)\proj{0}_0 \otimes \proj{\hat{T+M}} \\
    H_{\text{prop,t}}|_{\cS_{\text{legal}}} &= \frac12 \left( I \otimes \proj{\hatt} + I \otimes \proj{\hatt{-1}} - U_t \otimes\ketbra{\hatt}{\hatt{-1}} - U_t^\dagger \otimes \ketbra{\hatt{-1}}{\hatt} \right) \\
    H_{\text{qubit,t}}|_{\cS_{\text{legal}}} &= \frac12 \left( -2\proj{0}_{f_t} - 2\proj{0}_{s_t} + \proj{1}_{s_t} + \proj{1}_{s_t} \right) \otimes \left( \ketbra{\hatt}{\hatt{-1}} + \ketbra{\hatt{-1}}{\hatt} \right) \\
    H_{\text{time,t}}|_{\cS_{\text{legal}}} &= \frac18 I \otimes (\proj{\hatt} + 6\proj{\hatt{+1}} + \proj{\hatt{+2}} \\
    &\hspace{13mm}+ 2\ketbra{\hatt{+2}}{\hatt} + 2\ketbra{\hatt}{\hatt{+2}} + \ketbra{\hatt{+1}}{\hatt} + \ketbra{\hatt}{\hatt{+1}} + \ketbra{\hatt{+2}}{\hatt{+1}} + \ketbra{\hatt{+1}}{\hatt{+2}} \\
    &\hspace{13mm}+ \proj{\hatt{-3}} + 6\proj{\hatt{-2}} + \proj{\hatt{-1}} \\
    &\hspace{13mm}+ 2\ketbra{\hatt{-1}}{\hatt{-3}} + 2\ketbra{\hatt{-3}}{\hatt{-1}} + \ketbra{\hatt{-2}}{\hatt{-3}} + \ketbra{\hatt{-3}}{\hatt{-2}} + \ketbra{\hatt{-1}}{\hatt{-2}} + \ketbra{\hatt{-2}}{\hatt{-1}} )\,.
\end{align*}

\paragraph{Restriction to $\cS_{\text{prop1}}$} The next step is another application of Lemma~\ref{lem:projection}, now with
\[
    H_1 = (H_{\text{out}} + J_{\text{in}}H_{\text{in}} + J_2H_{\text{prop2}})|_{\cS_{\text{legal}}} \qquad H_2 = J_1 H_{\text{prop1}}|{\cS_{\text{legal}}}\,.
\]
Define the family of history states
\begin{equation}\label{eq:history_l}
    \ket{\eta_{l,i}} := \frac{1}{\sqrt{L}} \sum_{t=lL+M}^{(l+1)L+M-1} U_t\dots U_1\ket{i} \otimes \ket{\hatt}\,
\end{equation}
and 
\begin{equation}\label{eq:history_idle}
    \ket{\eta_{\text{idle},i}} := \frac{1}{\sqrt{M}} \sum_{t=1}^{M} \ket{i} \otimes \ket{\hatt}\,,
\end{equation}
where $i \in \{0,\dots,2^n-1\}$ identifies an arbitrary state in the computational basis, and in Eq.~\ref{eq:history_l} $l \in \{0,\dots,T_2\}$ picks out the block of time-steps during which single qubit gates are applied between the $l$th and $(l+1)$th 2-qubit gates. The states of Eq.~\ref{eq:history_idle} represent correct propagation during the $M$-step idling period before any gates are applied. 

Then the states from Eqs.~\ref{eq:history_l}~and~\ref{eq:history_idle} are 0-eigenvectors of $H_{\text{prop1}}$, and span a space $\cS_{\text{prop1}} \subset \cS_{\text{legal}}$ consisting of all states representing correct propagation (within the legal clockspace) of all single-qubit gates of $V$.  Furthermore, $H_{\text{prop1}}|_{\cS_{\text{legal}}}$ decomposes into $T_2+2$ invariant blocks: one ranging over all clock states from time $1$ to time $M$, corresponding to the idling period and spanned by states of the form $\ket{i}\otimes\ket{\hatt}$ for $\hatt \in \{1,\dots,M\}$; and $T_2+1$ many ranging over clock states representing times between 2-qubit gates (which are all of length $L$), spanned by states of the form $U_t\cdots U_1\ket{i}\otimes\ket{\hatt}$ for $\hatt \in \{lL+M,\dots,(l+1)L+M-1\}$. Within each block, $H_{\text{prop1}}|_{\cS_{\text{legal}}}$ corresponds exactly to the usual Kitaev propagation Hamiltonian (See Eq.~\eqref{eq:propagation_hamiltonian}), except only over single qubit gates, and for a total computation time of $M$ in the first case and $L$ in the second. By the usual arguments (see in particular Claim 2 from~\cite{kempe2006}), all non-zero eigenvalues of such a Hamiltonian ranging over $W$ timesteps are at least $c/W^2$ for some constant $c>0$. This implies that the smallest non-zero eigenvalue of $H_{\text{prop1}}|_{\cS_{\text{legal}}}$ is at least $\min\{c/L^2, c/M^2\} = c/M^2$ by the fact that $M \geq L$. Hence, all eigenvectors of $H_2$ orthogonal to $\cS_{\text{prop1}}$ have eigenvalue at least $J = J_1c/M^2$, and Lemma~\ref{lem:projection} implies that for $J_1 \geq \poly(\|H_1\|M^2) = \poly(n)$, $\lambda_0(H_1+H_2)$ can be lower bounded by $\lambda_0(H_1|_{\cS_{\text{prop1}}})-\frac18$.

The remainder of the proof requires us to lower bound the smallest eigenvalue of
\[
    H_{\text{out}}|_{\cS_{\text{prop1}}} + J_{\text{in}} H_{\text{in}}|_{\cS_{\text{prop1}}} + J_2 H_{\text{prop2}}|_{\cS_{\text{prop1}}}\,.
\]
In~\cite{kempe2006} it is shown that the Hamiltonian $(H_{\text{time},t} + H_{\text{qubit},t})|_{\cS_{\text{prop1}}}$ is the same as the restriction to $\cS_{\text{prop1}}$ of 
\begin{eqnarray*}
    \proj{00}_{f_t,s_t} &\otimes& 2\left(\ket{\hatt{-1}} - \ket{\hatt}\right)\left(\bra{\hatt{-1}} - \bra{\hatt} \right) + \\
    \proj{01}_{f_t,s_t} &\otimes& \frac12\left(\ket{\hatt{-1}} - \ket{\hatt}\right)\left(\bra{\hatt{-1}} - \bra{\hatt} \right) + \\
    \proj{10}_{f_t,s_t} &\otimes& \frac12\left(\ket{\hatt{-1}} - \ket{\hatt}\right)\left(\bra{\hatt{-1}} - \bra{\hatt} \right) + \\
    \proj{11}_{f_t,s_t} &\otimes& \left(\ket{\hatt{-1}} + \ket{\hatt}\right)\left(\bra{\hatt{-1}} + \bra{\hatt} \right) \,.
\end{eqnarray*}
Given this equivalent  form of the Hamiltonian (within $\cS_{\text{prop1}}$), it is now easy to see that $\braket{\eta}{H_{\text{prop2}}|\eta} = 0$: $\ket{\eta} \in \cS_{\text{prop1}}$ since it encodes valid propagation for all time-steps, and then by using the fact that at time $t$ $\ket{\eta}$ encodes (correct) propagation via a $CZ$ gate, the above Hamiltonian has $\ket{\eta}$ as a 0-eigenvector. Moreover -- and importantly for us -- this shows that, within the space $\cS_{\text{prop1}}$, $H_{\text{prop2}}$ is positive semi-definite. We will use this fact later to prove bounds on the second smallest eigenvalue of $H$.

\paragraph{Restriction to $\cS_{\text{prop}}$} Returning to the main proof, to bound the smallest eigenvalue of 
\[
    H_{\text{out}}|_{\cS_{\text{prop1}}} + J_{\text{in}} H_{\text{in}}|_{\cS_{\text{prop1}}} + J_2 H_{\text{prop2}}|_{\cS_{\text{prop1}}}\,,
\]
it is sufficient to lower bound instead the Hamiltonian
\[
    H_{\text{out}}|_{\cS_{\text{prop1}}} + J_{\text{in}} H_{\text{in}}|_{\cS_{\text{prop1}}} + H'\,,
\]
where $H'$ is some Hamiltonian acting on $\cS_{\text{prop1}}$ such that $H_{\text{prop2}}|_{\cS_{\text{prop1}}} - H'$ is positive semi-definite. In~\cite{kempe2006}, the authors construct such an $H'$, such that its nullspace $\cS_{\text{prop}}$ is spanned by the states
\[
    \ket{\eta_i} = \frac{1}{\sqrt{T_2+1}} \sum_{l=0}^{T_2} \ket{\eta_{l,i}} = \frac{1}{\sqrt{T+M}} \sum_{t=1}^{T+M} U_t\cdots U_1\ket{i} \otimes \ket{\hatt}\,,
\]
and whose smallest non-zero eigenvalue is $\frac{c}{LT_2^2} \geq \frac{c}{T^2}$. For our proof we can take precisely the same $H'$ (since our $H_{\text{prop2}}$ is identical to theirs), and therefore omit details of the construction here, instead referring to the proof given in~\cite{kempe2006}. 

We now apply Lemma~\ref{lem:projection} once more, this time with
\[
    H_1 = (H_{\text{out}} + J_{\text{in}}H_{\text{in}})|_{\cS_{\text{prop1}}} \qquad H_2 = J_2 H'
    ,.
\]
The eigenvectors of $H_2$ orthogonal to $\cS_{\text{prop}}$ have eigenvalues at least $J = J_2c/T^2$. One can therefore choose $J_2 \geq \poly(\|H_1\|T^2) = \poly(n)$ such that the smallest eigenvalue of $H_1 + H_2$ is lower bounded by $\lambda_0(H_1|_{\cS_{\text{prop}}}) - \frac18$, and it therefore suffices to consider at last the Hamiltonian
\[
    H_{\text{out}}|_{\cS_{\text{prop}}} + J_{\text{in}}H_{\text{in}}|_{\cS_{\text{prop}}}\,.
\]

\paragraph{Restriction to $\cS_{\text{in}}$} The final step of the proof is to apply Lemma~\ref{lem:projection} one more time with
\[
    H_1 = H_{\text{out}}|_{\cS_{\text{prop}}} \qquad H_2 = J_{\text{in}} H_{\text{in}}|_{\cS_{\text{prop}}}\,.
\]
By our construction (in particular, the lack of any witness register), the intersection of the nullspace of $H_{\text{in}}$ with $\cS_{\text{prop}}$ is 1-dimensional, and consists of the single history state 
\begin{equation}\label{eq:single_history_state}
    \ket{\eta} = \frac{1}{\sqrt{T+M}} \sum_{t=1}^{T+M} U_t\cdots U_1 \ket{x,0} \otimes \ket{\hat{t}}\,.
\end{equation}
This state is a 0-eigenvector of $H_2$. Any other eigenstate orthogonal to it (but inside $\cS_{\text{prop}}$) has eigenvalue at least $\frac{J_{\text{in}}}{T+M}$.\footnote{Any orthogonal eigenstate \emph{outside} $\cS_{\text{prop}}$ must have energy (with respect to the full Hamiltonian $H$) much larger than this via the previous applications of Lemma~\ref{lem:projection}.} Hence, choosing $J_{\text{in}} \geq \poly(\|H_1\|(T+M)) = \poly(n)$ is enough to ensure that the smallest eigenvalue of $(H_1 + H_2)$ is lower bounded by $\lambda_0(H_{\text{out}}|_{\cS_{\text{in}}})-\frac18$. 

Since the space $\cS_{\text{in}}$ is precisely the state $\ket{\eta}$, then we have
\[
    H_{\text{out}}|_{\cS_{\text{in}}} = \langle x,0 | U_1^\dagger\cdots U_{T+M}^\dagger (\proj{0}_0 \otimes I)U_{T+M}^\dagger\cdots U_1 | x,0\rangle = \epsilon\,,
\]
which immediately gives a lower bound of $\epsilon - \frac18$ on $\lambda_0(H_1 + H_2)$ and proves its uniqueness. Putting everything together, we get a lower bound on the smallest eigenvalue of the full Hamiltonian $H$ of $\lambda_0(H) \geq \epsilon - \frac12$. This can actually be improved by noting that, for the final two  applications of Lemma~\ref{lem:projection}, both Hamiltonians $H_1$ and $H_2$ are positive semi-definite, and hence we can obtain the slightly tighter bound of $\lambda_0(H) \geq \epsilon - \frac14$. 
\end{proof}

\subsection{Proof of Lemma~\ref{lem:fidelity_pert}}\label{app:proofs_2}
We will first introduce (review) some notation and concepts, following the conventions of~\cite{kempe2006}. 
\begin{definition} Let $\lambda^{*}\in \mathbb{R}$ be some cut-off, and let $\mathcal{H} = \mathcal{L}_{-} \oplus \mathcal{L}_{+}$, where $\mathcal{L}_{+}$ is the space spanned by eigenvectors of $H$ with eigenvalues $\lambda \geq \lambda_{*}$ and $\mathcal{L}_{-}$ is spanned by eigenvectors of $H$ of eigenvalue $\lambda<\lambda_{*}$. Let $\Pi_{\pm}$ be the corresponding projection onto $\mathcal{L}_{\pm}$. For an operator $X$ on $\mathcal{H}$ we define $X_{++}=X|_{\mathcal{L}_{+}} = \Pi_{+} X\Pi_{+}$, $X_{--}=X|_{\mathcal{L}_{-}} = \Pi_{-} X\Pi_{-}$, $X_{+-}= \Pi_{+} X\Pi_{-}$ and $X_{-+}= \Pi_{-} X\Pi_{+}$ .
\end{definition}
Let $\tilde{H} = Q + P$ be a sum of two Hamiltonians $Q$ and $P$, referred to as the \textit{unperturbed} and \textit{perturbation} Hamiltonian, respectively. Write $\lambda_j(Q),\ket{\psi_j}$ for the $j$th eigenvalue and eigenvector of $Q$, and denote $\lambda_j(\tilde{H}),\tilde{\ket{\psi_j}}$ for the $j$th eigenvalue and eigenvector of $\tilde{H}$. The \textit{resolvent} of $\tilde{H}$ is defined as 
\begin{align}
    \tilde{G}(z) := (zI-\tilde{H})^{-1} = \sum_{j} \left(z-\lambda_j(\tilde{H})\right)^{-1} \tilde{\ket{\psi_j}} \tilde{\bra{\psi_j}}.
\end{align}
Let $\lambda_{*}\in \mathbb{R}$ be some cut-off on the spectrum of $Q$. Define the self-energy as
\begin{align}
    \Sigma_{-} (z) := zI_{-} - \tilde{G}^{-1}_{--}(z).
\end{align}
We now have all definitions available needed for the proof of Lemma~\ref{lem:fidelity_pert}:
\lemmaNine*
\begin{proof}[Proof of Lemma~\ref{lem:fidelity_pert}]
We follow the proof of Lemma 11 in Ref.~\cite{kempe2006}, now for arbitrary eigenvectors and with slightly improved bounds on the fidelity. We will repeatedly use that by Lemma~\ref{lem:delta_close}, we have that $\lambda_i(\tilde{H}) \leq \lambda_i(H_\text{eff}) + c_r\mu$ via our choice of $\mu$. Let $\ket{\tilde{v}_{i,-}} = \Pi_{-} \ket{\tilde{v}_i}$. We must have that
\begin{align*}
    \norm{\Pi_{+} \tilde{H} \ket{\tilde{v}_i}} &= \tilde{\lambda}_i \norm{\Pi_{+} \ket{\tilde{v}_i}}\\
    &\leq (\lambda_i(H_\text{eff})+c_r\mu) \norm{\Pi_{+} \ket{\tilde{v}_i}},
\end{align*}
as well as
\begin{align*}
    \norm{\Pi_{+} \tilde{H} \ket{\tilde{v}_i}} &= \norm{\Pi_{+} Q \ket{\tilde{v}_i} + \Pi_{+} P \ket{\tilde{v}_i}}\\
    &\geq \norm{\Pi_{+} Q \ket{\tilde{v}_i}} - \norm{P}\\
    &\geq \lambda_{+} \norm{\Pi_{+} \ket{\tilde{v}_i}} - \norm{P}.
\end{align*}
Combining both, we obtain
\begin{align*}
    \norm{\Pi_{+} \ket{\tilde{v}_i}} \leq \frac{\norm{P}}{\lambda_+ - \lambda_i(H_\text{eff})-c_r\mu}.
\end{align*}
Therefore, we have that 
\begin{align}
\abs{\bra{\tilde{v}_i}\ket{\tilde{v}_{i,-}}}^2 &= \norm{\Pi_{-} \ket{\tilde{v}_i}}^2 \nonumber\\ 
&= 1-\norm{\Pi_{+} \ket{\tilde{v}_i}}^2
     \nonumber\\
    &\geq 1-\left(\frac{\norm{P}}{c_r \mu^{-3}-\lambda_i(H_\text{eff})-c_r\mu}\right)^2
\label{eq:fidelity_ineq1}
\end{align}
As in Ref.~\cite{kempe2006}, we will now move our efforts to bounding the fidelity between $\ket{\tilde{v}_{i,{-}}}$ and $\ket{v_{\text{eff},i}}$. We have that
\begin{align*}
    \tilde{G}_{--} &= \sum_{i} \left(z-\lambda_i(\tilde{H})\right)^{-1} \Pi_{-} \ket{\tilde{v}_{i}} \bra{\tilde{v}_{i}} \Pi_{-}\\
    &=\sum_{i} \left(z-\lambda_i(\tilde{H})\right)^{-1} \ket{\tilde{v}_{i,-}} \bra{\tilde{v}_{i,-}}.
\end{align*}
For the self-energy we have then
\begin{align*}
    \Sigma_{-}(z) &= z I_{-} - \tilde{G}_{--}^{-1}\\
    &= z \sum_{i} \ket{\tilde{v}_{i,-}} \bra{\tilde{v}_{i,-}} - \sum_{i} (z-\lambda_i(\tilde{H}))  \ket{\tilde{v}_{i,-}} \bra{\tilde{v}_{i,-}}\\
    & =  \sum_{i} \lambda_i(\tilde{H}) \ket{\tilde{v}_{i,-}} \bra{\tilde{v}_{i,-}}.
\end{align*}
Hence, $\ket{\tilde{v}_{i,-}}$ is an eigenstate of $\Sigma_{-}(z)$ with eigenvalue $\lambda_i(\tilde{H})$. By our assumptions we have that
\begin{align*}
    c_r\mu &\geq \norm{\Sigma_{-}(\lambda_i(\tilde{H})) - H_\text{eff}}\\
    &\geq  \bra{\tilde{v}_{i,-}} ( \Sigma_{-} (z) - H_\text{eff})  \ket{\tilde{v}_{i,-}} \\
    &=  \lambda_i(\tilde{H}) - \bra{\tilde{v}_{i,-}} H_\text{eff}  \ket{\tilde{v}_{i,-}},
\end{align*}
and so
\begin{align*}
    \bra{\tilde{v}_{i,-}} H_\text{eff}  \ket{\tilde{v}_{i,-}} &\leq \lambda_i(\tilde{H}) + c_r\mu\\ &\leq \lambda_i(H_\text{eff})+2c_r\mu.
\end{align*}
Let us write $ \ket{\tilde{v}_{i,-}} = a \ket{v_{\text{eff},i}} + b \ket{v_{\text{eff},i}^\perp}$, where $a,b\in\mathbb{R}$ and $a^2+b^2=1$. We have that $a={\bra{\tilde{v}_{i,-}} \ket{v_{\text{eff},i}}}$ and $b={\bra{\tilde{v}_{i,-}} \ket{v^{\perp}_{\text{eff},i}}}$. We obtain 
\begin{align*}
    \bra{\tilde{v}_{i,-}} H_\text{eff}  \ket{\tilde{v}_{i,-}} &= (a \bra{v_{\text{eff},i}} + b \bra{v_{\text{eff},i}^\perp}) H_\text{eff}  (a\ket{v_{\text{eff},i}} + b \ket{v_{\text{eff},i}^\perp})\\
    &= a^2 \bra{v_{\text{eff},i}} H_\text{eff} \ket{v_{\text{eff},i}} + b^2 \bra{v^{\perp}_{\text{eff},i}} H_\text{eff} \ket{v^{\perp}_{\text{eff},i}}\\
    &\geq a^2 \lambda_i(H_\text{eff}) + (1-a^2) \left(\lambda_i(H_\text{eff})+ \gamma_i(H_\text{eff})\right)\\
    &=  \lambda_i(H_\text{eff}) + (1-a^2) \gamma_i(H_\text{eff}).
\end{align*}
Again by combining the two inequalities, we find that
\begin{align}
    \abs{\bra{\tilde{v}_{i,-} }\ket{v_{\text{eff},i}}}^2 = a^2 \geq 1-\frac{2 c_r\mu}{\gamma_i (H_\text{eff})}.
    \label{eq:fidelity_ineq2}
\end{align}
Combining both Eq.~\eqref{eq:fidelity_ineq1} and~\eqref{eq:fidelity_ineq2}, we have that
\begin{align*}
    \abs{\bra{\tilde{v}_i}\ket{v_{\text{eff},i}}}^2 &\geq 1-\left(\sqrt{1-\abs{\bra{\tilde{v}_i}\ket{\tilde{v}_{i,-}}}^2}  +\sqrt{1-\abs{\bra{\tilde{v}_{i,-} }\ket{v_{\text{eff},i}}}^2}\right)^2\\
    &=1-\left(\frac{\norm{P}}{c_r \mu^{-3}-\lambda_i(H_\text{eff})-c_r\mu} + \sqrt{\frac{2 c_r\mu}{\gamma_i (H_\text{eff})}} \right)^2,
\end{align*}
which completes the proof.
\qedhere
\end{proof}

\end{document}